\documentclass[10pt,journal,letterpaper,twocolumn,twoside,nofonttune]{IEEEtran}

\usepackage{graphicx}
\usepackage{amsmath,amssymb,amsfonts}%,amsthm}
\usepackage{epstopdf}

\usepackage{cite}
\usepackage{ctable}
\usepackage{enumerate}
\usepackage{tikz,pgfplots}
\usepackage{mathrsfs}
\usepackage{setspace}
\usepackage{algorithm, algorithmic}

\usepackage{multirow}
\normalsize

\def\figref#1{Figure~\ref{#1}}
\long\def\comment#1{}

\def\figref#1{Fig.~\ref{#1}}
\def\comment#1{}
% for final production, use the following line
% \def\note#1{}

\def\0{{\mathbf 0}}

%\def\FigSize1{3.7in}
%\def\FigSizeTwo{3.7in}
%\def\FigSizeThree{3.7in}
%\def\FigSizeFour{3.7in}
%\def\FigSize1{3.5in}
%\def\FigSizeTwo{3.5in}
%\def\FigSizeThree{3.5in}
%\def\FigSizeFour{3.5in}
%\def\FigSize1{5.2in}
%\def\FigSizeTwo{5.2in}
%\def\FigSizeThree{5.2in}
%\def\FigSizeFour{5.2in}

%=========================================================

\def\0{{\mathbf 0}}

\def\figref#1{Fig.~\ref{#1}}
\def\comment#1{}

\def\C(#1){|#1|}

\def\eqar#1{\begin{eqnarray} #1 \end{eqnarray}}
\def\eq#1{\begin{equation} #1 \end{equation}}
\def\eqn#1{\begin{equation} \nonumber #1 \end{equation}}

\def\Def{\triangleq}
\def\comment#1{\mbox{\;\;\;#1}}
\def\Ind(#1){\Delta\left[#1\right]}

\def\Inner(#1,#2){\langle #1, #2 \rangle}

%
%\newtheorem{thm}{Theorem}%[section]
%\newtheorem{cor}[thm]{Corollary}
%\newtheorem{lem}[thm]{Lemma}
%\newtheorem{prop}[thm]{Proposition}
%\newtheorem{alg}{Algorithm}
%\newtheorem{ex}[thm]{Example}
%\newtheorem{defn}[thm]{Definition}
%\newtheorem{rmrk}{Remark}
%%\newproof{sketch}{Proof Sketch}
%%
%\renewcommand{\labelenumi}{\arabic{enumi}.}

\newtheorem{theorem}{{Theorem}}
\newtheorem{lemma}[theorem]{{Lemma}}

\newtheorem{corollary}[theorem]{{Corollary}}

%---> Equation environment ------
\newtheorem{remark}{{Remark}}
\newcommand{\be}[1]{\begin{equation}\label{#1}}
\newcommand{\ee}{\end{equation}}
\newcommand{\deff}{\mbox{$\stackrel{\rm def}{=}$}}

\newcommand{\Pn}{F^{\otimes n}}
\newcommand{\script}[1]{{\mathscr #1}}

\newcommand{\sX}{\script{X}}
\newcommand{\sY}{\script{Y}}

%----------

\newcommand{\shalf}{\mbox{\raisebox{.8mm}{\footnotesize $\scriptstyle 1$}
\footnotesize$\!\!\! / \!\!\!$ \raisebox{-.8mm}{\footnotesize
$\scriptstyle 2$}}}
\newcommand{\cG}{{\cal G}}
\newcommand{\cB}{{\cal B}}
\newcommand{\cC}{{\cal C}}
\newcommand{\cT}{\mathcal{T}}
\newcommand{\Tref}[1]{Theo\-rem\,\ref{#1}}

\newcommand{\Cref}[1]{Co\-ro\-lla\-ry\,\ref{#1}}
\DeclareMathAlphabet{\mathbfsl}{OT1}{ppl}{b}{it} %{OT1}{cmr}{bx}{it}

\newcommand{\hatu}{\hat{u}}

\newlength\myindent
\setlength\myindent{2em}
\newcommand\bindent{%
  \begingroup
  \setlength{\itemindent}{\myindent}
  \addtolength{\algorithmicindent}{\myindent}
}
\newcommand\eindent{\endgroup}
%\renewcommand{\thesection}{\roman{section}}
%%%%%%%%%%%%%%%%%%%%%%%%%%%%%%%%%%%%%%%%%%%%%%%%%%%%%%%%%%%%%%%%%%%%%%%%%%%%%%%%%%%%%%%%%%%%%
\begin{document}

\title{Relaxed Polar Codes }

\author{Mostafa El-Khamy, Hessam Mahdavifar, Gennady Feygin, Jungwon Lee, and Inyup Kang
\thanks{%
   This work was presented in part
   at the 2015 IEEE Wireless Communications and Networking Conference (WCNC), New Orleans, USA  \cite{ElkhamyRelaxed}.}
}

\maketitle

\begin{abstract}Polar codes are the latest breakthrough in coding theory, as they are the first family of codes with explicit construction that provably achieve the symmetric capacity of   discrete memoryless channels. Ar{\i}kan's polar encoder and successive cancellation decoder have  complexities of $N \log N$, for  code length $N$. Although, the complexity bound of $N \log N$ is asymptotically favorable, we report in this work methods to further reduce the encoding and decoding complexities of polar coding. The crux is to relax the polarization of certain bit-channels without performance degradation.
We consider schemes for relaxing the polarization of both \emph{very good} and \emph{very bad} bit-channels, in the process of channel polarization. Relaxed polar codes are proved to preserve the capacity achieving property of polar codes. Analytical bounds on the asymptotic and finite-length complexity reduction attainable by relaxed polarization are derived.
 For binary erasure channels, we show that the computation complexity can be reduced by a factor of 6, while preserving the rate and error performance. We also show that relaxed polar codes can be decoded with significantly reduced latency. For AWGN channels with medium code lengths, we show that relaxed polar codes can have lower error probabilities than conventional polar codes, while having reduced encoding and decoding computation complexities. \end{abstract}

\section{Introduction}

Polar codes, introduced by Ar{\i}kan \cite{arikan2009channel, arikan2009rate}, are the most recent breakthrough in coding theory. Polar codes are the first and, currently, the only family of codes with explicit construction (no ensemble to pick from) to asymptotically achieve the capacity of symmetric discrete memoryless channels as the block length goes to infinity. Besides their obvious application  in error correction, recent research have shown the possibility of applying polar codes and the polarization phenomenon in various communications and signal processing problems such as  data compression \cite{Arikan2,Ab}, BICM channels \cite{polarBICM}, wiretap channels \cite{MV}, multiple access channels \cite{STY,AT2,polar_MAC}, and broadcast channels \cite{polarBroadcast}.  There have also been various modified constructions of polar codes for the different applications, such as generalized polar codes \cite{KoradaPolarConst}, compound polar codes  \cite{compound}, concatenated polar codes \cite{mahdavifar2014concatenated}, and universal polar codes \cite{hassani2014universal}.

Polar codes can be encoded and decoded with relatively low complexity. Both the encoding complexity and the successive cancellation (SC) decoding complexity of polar codes are $O(N \log N)$, for code length $N$ \cite{arikan2009channel}. The decoding latency and memory requirements of polar decoders can be reduced to $O(N)$ \cite{Gross2, Gross1,MultiDimFast}. Hardware architectures for polar decoders, with $O(N)$ memory and processing elements, were implemented \cite{Gross2}. A semi-parallel architecture for SC decoding has been recently proposed \cite{Gross1}, where efficiency is achieved without a significant throughput penalty by sharing processing resources and taking advantage of the regular structure of polar codes.  The encoding and decoding latencies of polar codes can also be reduced to $O(N)$, through multi-dimensional polar transformations \cite{MultiDimFast}. Alamdar and Kschischang proposed  a simplified successive cancellation decoder with reduced latency and computational complexity by simplifying the decoder to decode all bits in a rate-one  or a rate-zero constituent code simultaneously \cite{alamdar2011simplified}. Reduction in decoding latency can also be achieved by changing the code construction, such as through interleaved concatenation of shorter polar codes \cite{mahdavifar2014concatenated}.

 In this paper, we propose methods to reduce both the encoding and decoding computational complexities of polar codes, by means of \emph{relaxing} the channel polarization. The resultant codes are called relaxed polar codes. Hence, hardware implementations for the encoders and decoders of relaxed polar codes can require smaller area and less power consumptions than conventional polar codes. Efficient methods for the implementation of the SC decoder, as in \cite{Gross2, Gross1,MultiDimFast}, can also be applied to  further improve the efficiency of decoding relaxed polar codes.

In practical scenarios, codes have finite block lengths and are designed with a specific information block length and rate in order to satisfy a certain error rate. Due to the nature of channel polarization, the error probability of certain bit channels decrease (or increase) exponentially at each polarization step. Hence, the encoding and decoding complexities can be reduced by relaxing the polarization of certain channels if their polarization degrees hit suitable thresholds, while satisfying the code rate and error rate requirements. For Ar{\i}kan's polar code with length $N$, each bit-channel is polarized $\log N$ times. However, for the proposed relaxed polar codes, some bit channels will be fully polarized $\log N$ times, and the polarization of the remaining bit-channels will be relaxed, where their polarization is aborted if they become sufficiently good or sufficiently bad with less than $\log N$ polarization steps.  Relaxed polarization results in fewer polarizing operations, and hence a reduction in complexity.  It is found that with careful construction of relaxed polar codes, there is no error performance degradation. In fact, it is observed that relaxed polar codes can have a lower error rate than conventional polar codes with the same rate.

The rest of this paper is organized as follows. In Section\,\ref{sec:sec1}, we give an overview of channel polarization theory and construction of conventional polar codes, which we call \emph{fully polarized} codes. In Section\,\ref{sec:sec2}, the notion of relaxed channel polarization is introduced and the relaxed channel polarization theory is established. The asymptotic bounds on the complexity reduction using relaxed polar codes are discussed in Section\,\ref{sec:sec3}. Then, upper bounds and lower bounds on the complexity reduction at finite block lengths are derived in Section\,\ref{sec:sec4}. These bounds are evaluated and compared with the actual complexity reductions at certain code parameters, in Section\,\ref{sec:sec5}.  Constructions of relaxed polar codes for general channels, and decoders for relaxed polar codes are discussed in Section\,\ref{sec:sec6}.  The relation between the relaxed polar code construction and the simplified successive cancellation decoder (SSCD) is discussed in Section \,\ref{sec:sec5b}. Numerical simulations on the AWGN channels are presented in Section\,\ref{sec:AWGN}. The paper is concluded in Section\,\ref{sec:sec7}.

%=======================================================================%
%                                                                       %
%     1. Ar{\i}kan's Fully Polarized Codes                                   %
%                                                                       %
%=======================================================================%

\section{Ar{\i}kan's Fully Polarized Codes}
\label{sec:sec1}

For any binary-input discrete memoryless channel (B-DMC) $W : \sX \rightarrow \sY$, let $W(y|x)$ denote the probability of receiving $y \in \!\sY$ given that $x \in \!\sX = \{0,1\}$ was sent, for any $x \in \!\sX$ and $y \in \!\sY$.
For an B-DMC $W$, the \emph{Bhattacharyya parameter} of $W$ is
\eq{
Z(W)
\,\ \deff\kern1pt
\sum_{y\in\sY} \!\sqrt{W(y|0)W(y|1)}.
}
The symmetric capacity of a B-DMC $W$ can be written as
\eq{I(W) \Def  \sum_{y\in\sY} \frac{1}{2} \sum_{x\in\sX} W(y|x) \log \frac{W(y|x)}
{\frac{1}{2} W(y|0) + \frac{1}{2} W(y|1)}.}

For a binary memoryless symmetric (BMS) channel with uniform input, the error probability of $W$ can be characterized as
\eq{ E(W) = \frac{1}{2} \sum_{y\in\sY} \min\{W(y|0), W(y|1)\}.}
The Bhattacharyya parameter $Z(W)$ can be shown to be always between $0$ and $1$. The Bhattacharyya parameter can be regarded as a measure of the reliability of $W$. Channels with $Z(W)$ close to zero are almost noiseless, while channels with $Z(W)$ close to one are almost pure-noise channels. More precisely, it can be proved that the probability of error of an BMS channel is upper-bounded by its Bhattacharyya parameter \cite{BPKailath}
\eq{\label{BP-Err} 0 \leq 2E(W)  \leq Z(W) \leq 1. }

The construction of polar codes is based on a phenomenon called \emph{channel polarization} discovered by Ar{\i}kan \cite{arikan2009channel}. Consider the polarization matrix
\eq{\label{G-def}
F
\ = \
\left[
\begin{array}{c@{\hspace{1.25ex}}c}
1 & 0\\
1 & 1\\
\end{array}
\right]
. }

Consider the $2 \times 2$ polarizing transformation $F$ which takes two independent copies of $W$ and performs the mapping $(W,W) \rightarrow (W^{-}, W^{+})$, where
$W: \{0,1\} \to \sY$, $W^-: \{0,1\} \to \sY^2$, and  $W^+: \{0,1\} \to \{0,1\} \times \sY^2$, then polarization is defined with the channel transformation
\eqar{W^-(y_1,y_2|x_1) = \frac{1}{2}  \sum_{x_2 \in \{0,1\}} W(y_1 | x_1 \oplus x_2) W(y_2|x_2), \\
W^+(y_1,y_2,x_1|x_2) =  \frac{1}{2} W(y_1 | x_1 \oplus x_2) W(y_2|x_2),
}
where $W^-$ and $W^+$ are degraded and upgraded channels respectively.
Hence, the following is true for the bit-channel rates
 \cite{arikan2009channel},
\eqar{ I(W^+) + I(W^-) &=& 2 I(W) \\ I(W^-) \leq  I(W) & \leq & I(W^+).}

The mapping of $(W,W) \rightarrow (W^{-}, W^{+})$ is called one level of polarization. The same mapping is applied to $W^{-}$ and $W^{+}$ to get $W^{--}$, $W^{-+}$, $W^{+-}$, $W^{++}$, which is the second level of polarization of $W$. The same process can be continued in order to polarize $W$ for any arbitrary number of levels. The polarization process can be also described using a binary tree, where the root of the tree is associated with the channel $W$. Each node in the binary tree is associated with some bit-channel $W'$ and has two children, where the left child corresponds to $W^{'-}$ and the right child corresponds to $W^{'+}$.

The channel polarization process can be also represented using the Kronecker powers of $F$ defined as follows. $F^{\otimes 1} = F$ and for any $i > 1$,
$$
F^{\otimes i}
\ = \
\left[
\begin{array}{c@{\hspace{1.25ex}}c}
F^{\otimes (i-1)} & 0\\
F^{\otimes (i-1)} & F^{\otimes (i-1)}\\
\end{array}
\right],
$$
where $F^{\otimes i}$ is a $2^i \times 2^i$ matrix. Let $n = \log_2 N$. Then, the $N \times N$ polarization transform matrix is defined as
$G_N \Def B_N F^{\otimes n}$,
where $B_N$ is the bit-reversal permutation matrix \cite{arikan2009channel}.
Let $u_1^N$ denote the vector $(u_1, u_2,\dots,u_N)$ of $N$ independent and uniform binary random variables. Let $x_1^N = u_1^N G_N$ be transmitted through $N$ independent copies of a binary-input discrete memoryless channel (B-DMC) $W$ to form channel output $y_1^N$.
Let  $W^N: \sX^N \rightarrow \sY^N$ denote the channel that results from $N$ independent copies of $W$ in the polar transformation i.e.
$W^N\kern-0.5pt(y^N_1|x^N_1)
\,\ \Def\,\
\prod_{i=1}^N W(y_i|x_i)
$.
The combined channel ${W}_N$ is defined
with transition probabilities given by
\be{Wtilde}
{W}_N(y^N_1|u^N_1)
\,\ \Def\,\
W^N\kern-1pt\bigl(y^N_1\hspace{1pt}{\bigm|}\hspace{1pt}u^N_1\hspace{1pt} G_N\bigr)
\kern1pt = \kern2pt
W^N\kern-1pt
\bigl(y^N_1\hspace{1pt}{\bigm|}\hspace{1pt}u^N_1\hspace{1pt} B_N\Pn \bigr).
\ee
This is the channel that the random vector $u_1^N$
 observes through the polar transformation.

 Assuming uniform channel input and a genie-aided successive cancellation decoder, the bit-channel $W^{(i)}_N$ is defined with the following transition probability:
\be{Wi-def}
W^{(i)}_N\bigl( y^N_1,u^{i-1}_1 | \hspace{1pt}u_i)
\,\ \Def \,\
\frac{1}{2^{N-1}}\hspace{-5pt}
\sum_{u_{i+1}^N \in \{0,1\}^{N-i}} \hspace{-12pt}
{W}_N\Bigl(y^N_1\hspace{1pt}{\bigm|}\hspace{1pt}
u_1^N \Bigr).
\ee
Notice that $W^{(i)}_N$ gives the transition probabilities of $u_i$ assuming all the preceding bits $u_1^{i-1}$ are already decoded and are available, together with the $N$ observations at the channel output $y_1^N$. This is actually the channel that $u_i$ observes and is also referred to as the $i$-th bit-channel. It can be observed that $W^{(i)}_N$ corresponds to the $i$-node in the $n$-th level of polarization of $W$. The following recursive formulas hold for Bhattacharyya parameters of individual bit-channels in the polar transformation \cite{arikan2009channel}
\begin{align}
\label{Z_recursion1}
Z(W_{2N}^{(2i-1)}) &\leq 2Z(W_N^{(i)}) - Z(W_N^{(i)})^2\\
\label{Z_recursion2}
Z(W_{2N}^{(2i)}) &= Z(W_N^{(i)})^2
\end{align}
with equality iff $W$ is a binary erasure channel.

 The channel polarization theorem states that as the code length $N$ goes to infinity, the bit-channels become polarized, meaning that they either become noise-free or very noisy.
Define the set of good bit-channels according to the channel $W$ and a positive constant $\beta \,{<}\, \shalf$ as
\begin{eqnarray}
\label{good-def}
\cG_N(W,\beta)
&\hspace*{-6pt}{\deff}\hspace*{-6pt}&
\left\{\, i \in [N] ~:~ Z(W^{(i)}_N) < 2^{-N^{\beta}}\!\!/N \hspace{1pt}\right\},
\end{eqnarray}
where $[N] \ \ \Def \ \{1,2,\dots,N\}$, then the main channel polarization theorem follows ~\cite{arikan2009channel,arikan2009rate}:

\begin{theorem}
\label{polar_thm1}
 For any BMS channel $W$, with capacity $\cC(W)$,  and any constant $\beta \,{<}\, \shalf$,
$$\lim_{N \to \infty} \frac{\left|\cG_N(W,\beta)\right|}{N}
\,=\,
\cC(W).$$
\end{theorem}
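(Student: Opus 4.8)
The plan is to recast the deterministic quantity $|\cG_N(W,\beta)|/N$ as a probability attached to a random walk on the polarization tree, and then to prove two things: that the bit-channels polarize (a martingale argument), and that the good ones polarize \emph{fast enough} to clear the threshold $2^{-N^{\beta}}/N$ (the rate-of-polarization argument, due to Ar{\i}kan--Telatar). Concretely, let $B_1, B_2, \dots$ be i.i.d.\ uniform bits and define a random channel sequence by $W_0 = W$, with $W_n = W_{n-1}^{-}$ if $B_n = 0$ and $W_n = W_{n-1}^{+}$ if $B_n = 1$. Writing $N = 2^n$, a uniformly chosen index $i \in [N]$ selects a uniformly random root-to-leaf path, so $W_N^{(i)}$ has the law of $W_n$ and $|\cG_N(W,\beta)|/N = \Pr\!\left(Z(W_n) < 2^{-N^{\beta}}/N\right)$. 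Set $I_n = I(W_n)$ and $Z_n = Z(W_n)$, and recall that for a BMS channel $I(W)=\cC(W)$.

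First I would establish basic polarization. By the conservation law $I(W^-)+I(W^+) = 2I(W)$, the process $I_n$ is a bounded martingale, hence converges almost surely and in $L^1$ to a limit $I_\infty$ with $\mathbb{E}[I_\infty] = I_0 = I(W)$. To identify $I_\infty$ I would track $Z_n$: using $Z_n = Z_{n-1}^2$ on $B_n=1$ and $Z_{n-1}\le Z_n \le 2Z_{n-1}-Z_{n-1}^2$ on $B_n=0$, one checks that $Z_n$ is a bounded supermartingale, so it converges almost surely. On the infinitely many steps with $B_n=1$ we have $Z_n = Z_{n-1}^2$, so the limit satisfies $Z_\infty = Z_\infty^2$, forcing $Z_\infty \in \{0,1\}$; the standard inequalities linking $I$ and $Z$ (small $Z$ forces $I$ near $1$, large $Z$ forces $I$ near $0$) then give $I_\infty \in \{0,1\}$. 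Hence $\Pr(Z_\infty = 0) = \Pr(I_\infty = 1) = \mathbb{E}[I_\infty] = \cC(W)$.

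The hard part is upgrading ``$Z_n \to 0$ with probability $\cC(W)$'' to ``$Z_n$ falls below the doubly-exponential threshold with probability $\cC(W)$.'' Fix $\beta < \beta' < \shalf$ and $\epsilon>0$, and choose $m$ so large that $\Pr(Z_m < \delta) > \cC(W) - \epsilon$ for a small $\delta$ to be fixed. Conditioned on $Z_m<\delta$, I would analyze $V_n = \log_2 Z_n$ over the remaining steps: each ``$+$'' step doubles $V_n$ (since $Z\mapsto Z^2$), while each ``$-$'' step increases $V_n$ by at most $1$ (since $Z(W^-)\le 2Z$). The delicate point is that the additive increments from ``$-$'' steps are amplified by subsequent squarings, so the accumulated correction must be bounded carefully. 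The strong law of large numbers makes the fraction of ``$+$'' steps approach $\tfrac12 > \beta'$, and a large-deviations estimate shows that, with conditional probability tending to $1$, at least $\beta' n$ of the steps are ``$+$'', which for $\delta$ small enough yields $Z_n \le 2^{-2^{n\beta'}}$. Letting $\epsilon \to 0$ gives $\liminf_n \Pr\!\big(Z_n \le 2^{-2^{n\beta'}}\big) \ge \cC(W)$.

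Finally I would reconcile the thresholds and supply the matching upper bound. Since $2^{n\beta'}$ eventually exceeds $2^{n\beta}+n$, the event $\{Z_n \le 2^{-2^{n\beta'}}\}$ is contained in $\{Z_n < 2^{-N^{\beta}}/N\}$ for large $n$, so the extra $1/N$ factor is harmless and $\liminf_N |\cG_N(W,\beta)|/N \ge \cC(W)$. For the reverse inequality I would use conservation of symmetric capacity, $\sum_{i=1}^{N} I(W_N^{(i)}) = N\,I(W)$: every $i \in \cG_N(W,\beta)$ has vanishing $Z(W_N^{(i)})$, hence $I(W_N^{(i)}) \ge 1-\eta_N$ with $\eta_N \to 0$, so $|\cG_N(W,\beta)|\,(1-\eta_N) \le N\,I(W)$ and $\limsup_N |\cG_N(W,\beta)|/N \le \cC(W)$. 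Combining the two bounds gives the claimed limit. I expect the rate-of-polarization step to be the principal obstacle, since controlling the amplified ``$-$''-step corrections and extracting the sharp exponent $\beta<\shalf$ is considerably more involved than the martingale convergence that suffices for basic polarization.
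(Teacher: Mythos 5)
A preliminary remark: the paper does not actually prove Theorem~\ref{polar_thm1}; it quotes it as known, citing \cite{arikan2009channel,arikan2009rate}, so your attempt can only be measured against the standard literature proof --- which is exactly the strategy you outline. Your probabilistic reformulation ($|\cG_N(W,\beta)|/N = \Pr\left(Z(W_n) < 2^{-N^{\beta}}/N\right)$ for the random channel process $W_n$), the basic polarization step (bounded martingale $I_n$, supermartingale $Z_n$, identification of the limit via $Z_\infty = Z_\infty^2$), the converse via $\sum_i I(W_N^{(i)}) = N I(W)$ together with $I \geq 1 - O(Z)$, and the absorption of the harmless factor $1/N$ by passing to $\beta < \beta' < \tfrac{1}{2}$ are all correct, and they are the standard ingredients.

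The genuine gap is exactly where you predicted it, and the specific argument you propose there fails. From a fixed time $m$ with $Z_m < \delta$, the two rules ``a $+$ step doubles $V = \log_2 Z$'' and ``a $-$ step increases $V$ by at most $1$'' give, after unrolling over $[m,n]$ with $k$ plus-steps and $\ell$ minus-steps, only the bound $V_n \leq 2^k\bigl(V_m + \ell\bigr)$: in the worst case the $\ell$ additions happen first and each is doubled $k$ times. Since $\ell \approx (n-m)/2 \to \infty$ while $\log_2(1/\delta)$ is a fixed constant, this bound becomes vacuous (positive) for every fixed $\delta$, however small; counting plus-steps by the law of large numbers cannot rescue it. The missing idea is the Ar{\i}kan--Telatar bootstrapping, which needs two distinct stages. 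Stage one: condition on the event $\{Z_j \leq \delta \ \forall j \geq m\}$ --- which still has probability at least $\cC(W) - \epsilon$ for large $m$ because $Z_n \to 0$ almost surely on the polarization event, and which is strictly stronger than conditioning on $\{Z_m < \delta\}$ alone --- so that every plus-step decreases $V$ by at least $\log_2(1/\delta)$ while every minus-step adds at most $1$; a Hoeffding bound then gives a linearly growing exponent, $V_{\theta n} \leq -c\,\theta n$ with $c \approx \tfrac{1}{2}\log_2(1/\delta)$, with conditional probability $1-o(1)$. Stage two: only now run your squaring-dominated analysis on the remaining $(1-\theta)n$ steps; the additive corrections total at most $\ell \leq (\tfrac{1}{2}+o(1))n$, which is dominated by $|V_{\theta n}| = c\theta n$ once $\delta$ is chosen small enough that $c\theta > (1-\theta)/2$, so that $V_n \leq 2^k\bigl(V_{\theta n} + \ell\bigr) \leq -2^k$ with $k \geq (\tfrac{1}{2}-\epsilon)(1-\theta)n \geq \beta' n$. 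With this two-stage structure your outline becomes the complete proof; without it, the rate-of-polarization step --- the heart of the theorem --- is not established.
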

\Tref{polar_thm1} readily leads to a construction of capacity-achieving \emph{polar codes}. The crux of polar codes is to carry the information bits on the upgraded noise-free channels and freeze the degraded noisy channels to a predetermined value, e.g. zero. The following theorem shows the error exponent under successive cancellation decoding \cite{arikan2009channel}:

\begin{theorem}
\label{polar_thm2}
Let $W$ be a BMS channel and let $k=\left|\cG_N(W,\beta)\right|$ be the cardinality of the information bits, which are encoded using a polar code of length $N$, and transmitted
over $W^N$, then the probability of decoder error under successive
cancellation decoding satisfies
$ P_e \, \leq \,
\sum_{i \in \cG_N(W,\beta)} \!Z(W^{(i)}_N) \leq 2^{-N^{\beta}}.
$
\end{theorem}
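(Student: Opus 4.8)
The plan is to follow Ar{\i}kan's first-error-event decomposition. First I would set up the block error event under genie-aided successive cancellation (SC) decoding as a union over the information positions. Writing $\mathcal{A} = \cG_N(W,\beta)$ for the information set and letting $\hat u_i$ denote the SC estimate of $u_i$, I define for each $i \in \mathcal{A}$ the first-error event
\[
\mathcal{B}_i = \bigl\{(u_1^N,y_1^N) : u_1^{i-1} = \hat u_1^{i-1},\ \hat u_i \neq u_i \bigr\},
\]
namely the event that bit $i$ is decoded incorrectly although all preceding bits were correct. Since the frozen bits are known to both encoder and decoder, any block error must produce a first erroneous bit at some information position, so the block error event is contained in $\bigcup_{i \in \mathcal{A}} \mathcal{B}_i$, and the union bound gives $P_e \leq \sum_{i \in \mathcal{A}} \Pr(\mathcal{B}_i)$.

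The second step bounds each $\Pr(\mathcal{B}_i)$ by the reliability of the corresponding bit-channel $W^{(i)}_N$. Abbreviate the genie-aided observation as $v=(y_1^N,u_1^{i-1})$. On $\mathcal{B}_i$ the genie supplies the true $u_1^{i-1}$, so the decision on $u_i$ is exactly the likelihood-ratio test for $W^{(i)}_N$ with input $u_i$ and observation $v$; hence $\Pr(\mathcal{B}_i)$ is at most the uniform-input ML error probability of $W^{(i)}_N$. The technical heart is to bound this error probability by the Bhattacharyya parameter: on the error region the wrong hypothesis is at least as likely as the true one, so the error indicator is dominated by $\sqrt{W^{(i)}_N(v\,|\,u_i\oplus 1)/W^{(i)}_N(v\,|\,u_i)}$, and averaging this geometric-mean bound over $u_i$ and $v$ collapses the sum to $Z(W^{(i)}_N)$. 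This yields $\Pr(\mathcal{B}_i) \leq Z(W^{(i)}_N)$ (equivalently, identifying $\Pr(\mathcal{B}_i)$ with $E(W^{(i)}_N)$ and invoking \eqref{BP-Err}), and therefore $P_e \leq \sum_{i \in \mathcal{A}} Z(W^{(i)}_N)$, which is the first inequality in the statement.

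Finally I would invoke the definition of the good set in \eqref{good-def}. By construction every $i \in \cG_N(W,\beta)$ satisfies $Z(W^{(i)}_N) < 2^{-N^{\beta}}/N$, and the index set has cardinality $k = |\cG_N(W,\beta)| \leq N$. Substituting gives
\[
P_e \ \leq\ \sum_{i \in \cG_N(W,\beta)} Z(W^{(i)}_N)\ <\ N \cdot \frac{2^{-N^{\beta}}}{N}\ =\ 2^{-N^{\beta}},
\]
which completes the chain of inequalities.

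The step I expect to require the most care is the passage from $\Pr(\mathcal{B}_i)$ to $Z(W^{(i)}_N)$: it relies on the genie-aided assumption to reduce the joint SC decision to a single bit-channel ML test, and on the geometric-mean domination of the error indicator that turns the pairwise error probability into the Bhattacharyya sum. Everything else---the union bound and the final counting argument---is routine once this per-bit bound is in place.
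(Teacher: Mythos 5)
Your proof is correct, and it is precisely the classical argument this paper relies on: Theorem~\ref{polar_thm2} is stated here as a known result of Ar{\i}kan \cite{arikan2009channel} with no proof given in the paper itself. Your reconstruction --- the first-error-event decomposition with the union bound, the reduction of each first-error probability to a genie-aided bit-channel ML test bounded by $Z(W^{(i)}_N)$ via the geometric-mean (Bhattacharyya) domination, and the final counting step using $\left|\cG_N(W,\beta)\right| \leq N$ against the threshold $2^{-N^{\beta}}\!/N$ in the definition \eqref{good-def} --- matches Ar{\i}kan's original proof in both structure and substance.
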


Similar to the set of good bit-channels, the set of bad bit-channels is defined according to the channel $W$ and a positive constant $\beta \,{<}\, \shalf$ as
\begin{eqnarray}
\label{bad-def}
\cB_N(W,\beta)
&\hspace*{-6pt}{\deff}\hspace*{-6pt}&
\left\{\, i \in [N] ~:~ Z(W^{(i)}_N) > 1 - 2^{-N^{\beta}} \hspace{1pt}\right\},
\end{eqnarray}

The following corollary can be derived by specializing the Theorem\,3 of ~\cite{hassani2010scaling}:
\begin{corollary}
\label{cor-hassani}
Let $W$ be an arbitrary BSM channel. Then, for
any positive constant
$\beta < \shalf$,
\be{cor-limit}
\lim_{N \rightarrow \infty} \frac{\left|\cB_N(W, \beta)\right|}{N}
\:=\,
1-\cC(W).
\ee
\end{corollary}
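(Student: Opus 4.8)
The plan is to obtain \eqref{cor-limit} as the ``bad-channel'' mirror of Theorem~\ref{polar_thm1}, separating the argument into a soft (fixed-threshold) part and the quantitative (shrinking-threshold) part. First I would record the capacity conservation law: averaging $I(W^-)+I(W^+)=2I(W)$ over one polarization level gives $\frac{1}{N}\sum_{i=1}^N I(W^{(i)}_N)=\cC(W)$ for every $N$. Combined with the basic polarization phenomenon stated in the text---almost every bit-channel has symmetric capacity tending to $0$ or $1$---this forces the limiting fraction of channels with $I(W^{(i)}_N)\to 0$ to equal $1-\cC(W)$, since the fraction with $I\to 1$ must carry the preserved average $\cC(W)$. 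I would then translate capacity into Bhattacharyya parameter through the standard bound $I(W)\ge\log\tfrac{2}{1+Z(W)}$ \cite{arikan2009channel}, which rearranges to $Z(W)\ge 2^{\,1-I(W)}-1$ and shows that $I(W^{(i)}_N)\to 0$ is equivalent to $Z(W^{(i)}_N)\to 1$. This already yields $\lim_{N}\frac{1}{N}\bigl|\{\,i: Z(W^{(i)}_N)>1-\delta\,\}\bigr|=1-\cC(W)$ for every \emph{fixed} $\delta\in(0,1)$.

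The genuine content of the corollary is to replace the fixed gap $\delta$ by the shrinking gap $\delta_N=2^{-N^{\beta}}$, i.e.\ a rate-of-polarization statement for the upper tail. Here I would mirror the Ar{\i}kan--Telatar argument underlying Theorem~\ref{polar_thm1}: in the good direction, once $Z(W^{(i)}_N)$ has become small, the exact recursion \eqref{Z_recursion2}, $Z\mapsto Z^2$, drives it doubly-exponentially toward $0$, producing the threshold $2^{-N^{\beta}}$ for a fraction tending to $\cC(W)$. To control the upper tail I would invoke the two-sided scaling result of Hassani et al.\ \cite{hassani2010scaling}; their Theorem~3 quantifies the polarization speed on both tails simultaneously, and specializing it to the event $Z(W^{(i)}_N)>1-2^{-N^{\beta}}$ gives a limiting fraction $1-\cC(W)$. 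Since $N=2^n$ and $2^{-N^{\beta}}$ is exactly the threshold defining $\cB_N(W,\beta)$ in \eqref{bad-def}, this is precisely \eqref{cor-limit}.

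The hard part, and the reason one cannot simply dualize Theorem~\ref{polar_thm1} in a line, is the asymmetry of the Bhattacharyya recursions. The $+$ branch squares $Z$ exactly in \eqref{Z_recursion2}, but the $-$ branch only gives the inequality \eqref{Z_recursion1}: writing $Z=Z(W^{(i)}_N)$, one has $Z(W^{(2i-1)}_{2N})\le 2Z-Z^2$, equivalently $1-Z(W^{(2i-1)}_{2N})\ge(1-Z)^2$, with equality \emph{iff} $W$ is a binary erasure channel. Thus $1-Z$ carries no guaranteed doubly-exponential contraction from the recursions alone---indeed a $+$ step roughly doubles $1-Z$ near $Z=1$---so the quantitative approach of $Z(W^{(i)}_N)$ to $1$ for non-erasure channels is not visible at the level of these two inequalities. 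This is exactly the gap closed by the finer martingale analysis of \cite{hassani2010scaling}; once that two-sided speed is granted, extracting the bad-channel fraction is immediate, and the soft part above pins the limit at $1-\cC(W)$.
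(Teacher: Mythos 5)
Your proposal is correct and, at its load-bearing step, identical to the paper's: the paper offers no proof beyond remarking that the corollary ``can be derived by specializing the Theorem~3 of \cite{hassani2010scaling}'', which is precisely the two-sided scaling result you invoke for the shrinking-threshold part. Your additional soft-limit argument (capacity conservation plus the $I$--$Z$ inequalities) and your explanation of why the asymmetry of the $Z$-recursions blocks a naive dualization of Theorem~\ref{polar_thm1} are correct but are extra scaffolding the paper does not require, since the cited theorem already pins both the threshold $1-2^{-N^{\beta}}$ and the limiting fraction $1-\cC(W)$.
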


%=======================================================================%
%                                                                       %
%     2. Relaxed Polarization Theory                                    %
%                                                                       %
%=======================================================================%

\section{Relaxed Polarization Theory}
\label{sec:sec2}

In this section, we define relaxed polarization. We prove that, similar to conventional polar codes, relaxed polar codes can asymptotically achieve the capacity of a binary memoryless symmetric channel. We also prove that the bit-channel error probability of relaxed polar codes is at most that of conventional polar codes without rate-loss.

Let us observe the definition of good channels in \Tref{polar_thm1}. Let us also observe that the Bhattacharrya parameters (BP) approach $0$ or $1$ exponentially with the block length $N$. Let $\widetilde{W}$ denote the bit-channels of the relaxed polar code. Consider two independent copies of a parent bit-channel $i$ at polarization level $t$ to be polarized into two bit-channel children at level $t+1$, corresponding to a code of length $2^{t+1}$, via the following channel transformation
\eq{\left(\widetilde{W}_{2^t}^{(i)}, \widetilde{W}_{2^t}^{(i)} \right) \to  \left(\widetilde{W}_{2^{t+1}}^{(2i-1)}, \widetilde{W}_{2^{t+1}}^{(2i)} \right).}

Consider the case, when the polarized channel $\widetilde{W}_{2^t}^{(i)}$ at level $t < n$, where $n = \log N$, is sufficiently good, such that it satisfies the definition of a good channel at the target length $2^n$, i.e. $Z(\widetilde{W}^{(i)}_{2^t}) < 2^{-N^{\beta}}/N$. Then, the idea of relaxed polarization is to stop further polarization of this good channel, and the corresponding node in the polarization tree is called a relaxed node, such that the channels of all the descendents of a relaxed node are the same as that of the relaxed parent node and will also be relaxed. Let $u^j_{1,o}$ and $u^j_{1,e}$ denote the sub-vectors with odd and even indices, respectively.
Then, the bit-channel transformation at the relaxed node is given by
\begin{align*}
\widetilde{W}_{2^{t+1}}^{(2i-1)}(y_1^{2^{t+1}}, u_1^{2i-2} | u_{2i-1}) &=& \widetilde{W}_{2^{t}}^{(i)}(y_1^{2^{t}}, u_{1,o}^{2i-2}|u_{2i-1}) \\
 \widetilde{W}_{2^{t+1}}^{(2i)}(y_1^{2^{t+1}}, u_1^{2i-1} | u_{2i}) &=& \widetilde{W}_{2^{t}}^{(i)}(y_{2^{t}+1}^{2^{t+1}}, u_{1,e}^{2i-2}|u_{2i}).
\end{align*}

Relaxing the further polarization of sufficiently good channels is called good-channel relaxed polarization.
For the good-channel relaxed polar code, define the set of good bit-channels according to the channel $W$ and a positive constant $\beta \,{<}\, \shalf$ as
\begin{eqnarray}
\label{good-def-rlx}
\widetilde{\cG}_N(\widetilde{W},\beta)
&\hspace*{-6pt}{\deff}\hspace*{-6pt}&
\left\{\, i \in [N] ~:~ Z(\widetilde{W}^{(i)}_N) < 2^{-N^{\beta}}\!\!/N \hspace{1pt}\right\}.
\end{eqnarray}
 Next, we show that relaxed polar codes, similar to fully polarized codes, asymptotically achieve the capacity of BMS channels.
\begin{theorem}
\label{polar_thm3}
{ For any BMS channel $W$, with capacity $\cC(W)$,  and any constant $\beta \,{<}\, \shalf$,}
$$
\lim_{N \to \infty} \frac{\left|\widetilde{\cG}_N(\widetilde{W},\beta)\right|}{N}
\,=\,
\cC(W)\vspace{1.5ex}.
$$
\end{theorem}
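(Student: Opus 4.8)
The plan is to sandwich $|\widetilde{\cG}_N(\widetilde{W},\beta)|/N$ between a lower bound obtained by direct comparison with the fully polarized code of \Tref{polar_thm1} and an upper bound obtained from conservation of total symmetric capacity; both bounds will converge to $\cC(W)$.

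For the lower bound I would argue the set inclusion $\cG_N(W,\beta) \subseteq \widetilde{\cG}_N(\widetilde{W},\beta)$. Fix a leaf index $i$ and follow its path from the root of the polarization tree. No relaxation is applied until the channel first becomes good, so along this path the relaxed bit-channel $\widetilde{W}^{(\cdot)}$ coincides with the fully polarized bit-channel $W^{(\cdot)}$ up to the first level $t$ at which the Bhattacharyya parameter drops below the common threshold $2^{-N^\beta}/N$. If this first good level satisfies $t<n$, the node is relaxed and, by the stated transformation, every descendant inherits the same channel and hence stays good; if the channel only becomes good at the leaf level $t=n$, the relaxed and fully polarized leaves coincide and are good. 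In either case, $Z(W_N^{(i)})<2^{-N^\beta}/N$ forces $Z(\widetilde{W}_N^{(i)})<2^{-N^\beta}/N$, giving the inclusion and hence $|\widetilde{\cG}_N(\widetilde{W},\beta)|\ge|\cG_N(W,\beta)|$. Invoking \Tref{polar_thm1} yields $\liminf_{N\to\infty}|\widetilde{\cG}_N(\widetilde{W},\beta)|/N \ge \cC(W)$.

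For the upper bound I would exploit that relaxation preserves the sum of symmetric capacities across each level. At an ordinary node $I(W^-)+I(W^+)=2I(W)$, while at a relaxed node both children are stochastically equivalent to the parent and each again carries capacity $I(W)$; by induction the total capacity over the leaves is conserved, so $\sum_{i=1}^N I(\widetilde{W}_N^{(i)})=N\,\cC(W)$. Since every good channel has $Z(\widetilde{W}_N^{(i)})<2^{-N^\beta}/N$, the standard relation between Bhattacharyya parameter and capacity (whereby $I\to 1$ as $Z\to 0$, e.g. $1-I(W)\le Z(W)/\ln 2$) shows each good bit-channel has capacity at least $1-\delta_N$ with $\delta_N\to 0$. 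Restricting the capacity sum to good indices then gives $|\widetilde{\cG}_N(\widetilde{W},\beta)|(1-\delta_N)\le N\,\cC(W)$, whence $\limsup_{N\to\infty}|\widetilde{\cG}_N(\widetilde{W},\beta)|/N \le \cC(W)$. Combining the two bounds establishes the claimed limit.

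The step I expect to be the main obstacle is making the comparison in the lower bound fully rigorous: one must verify that the relaxed channel genuinely agrees with the fully polarized channel up to the freezing level, and that the relaxed children truly retain the parent's Bhattacharyya parameter, so that goodness is inherited by \emph{all} descendants of a relaxed node. Once this monotone comparison is pinned down, the inclusion $\cG_N\subseteq\widetilde{\cG}_N$ and the capacity-conservation bound both follow routinely.
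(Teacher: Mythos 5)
Your proposal is correct, and its first half coincides with the paper's own proof: the paper establishes exactly the inclusion $\cG_N(W,\beta) \subseteq \widetilde{\cG}_N(\widetilde{W},\beta)$ by the same case analysis (either a leaf becomes good only at level $n$, in which case the relaxed and fully polarized leaves coincide, or some ancestor became good at a level $t<n$, in which case that ancestor is relaxed and all $2^{n-t}$ of its descendants inherit its Bhattacharyya parameter and are good), and then concludes by invoking \Tref{polar_thm1}. The genuine difference is your second half. The paper stops at the inclusion, so strictly speaking it only proves $\liminf_{N\to\infty}\left|\widetilde{\cG}_N(\widetilde{W},\beta)\right|/N \geq \cC(W)$; the matching upper bound is left implicit (it follows, for instance, from the converse to the channel coding theorem, since carrying information reliably on a fraction of bit-channels exceeding $\cC(W)$ would beat capacity). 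You supply this converse explicitly by capacity conservation: at a relaxed node both children are copies of the parent, so $\sum_{i=1}^N I(\widetilde{W}_N^{(i)}) = N\,\cC(W)$ holds by induction exactly as in the fully polarized case, and since each good index satisfies $1-I(\widetilde{W}_N^{(i)}) \leq Z(\widetilde{W}_N^{(i)})/\ln 2 \to 0$, you obtain $\limsup_{N\to\infty}\left|\widetilde{\cG}_N(\widetilde{W},\beta)\right|/N \leq \cC(W)$. This makes your write-up more self-contained than the paper's. The only point needing care is the one you already flag, namely that the relaxed tree agrees with the fully polarized tree on every node having no relaxed ancestor; this is immediate from the paper's definition of the relaxed transformation, under which a relaxed node's descendants are verbatim copies of it, so both your lower-bound comparison and your conservation identity go through.
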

\begin{proof} Consider a relaxed channel at level $t < n$, where $n=\log N$. Then $Z(\widetilde{W}^{(i)}_{2^t}) < 2^{-N^{\beta}}/N$. Then the BPs of all its $2^{n-t}$ descendents at level $n$  are equal to
$Z(\widetilde{W}^{(i)}_{2^t})$ and are in  $\widetilde{\cG}_N(\widetilde{W},\beta)$.
In case of full polarization, if a channel belongs to ${\cG}(W,\beta)$, then it must have polarized to a good channel at level $n$ or earlier. If it polarized at level $n$, then by definition it also belongs to $\widetilde{\cG}_N(\widetilde{W},\beta)$. Otherwise, its parent has polarized to a good channel at level $t<n$. With relaxed polarization, this channel and all its $2^{n-t}-1$ siblings will also be in  $\widetilde{\cG}_N(\widetilde{W},\beta)$. Therefore, ${\cG}(W,\beta) \subset \widetilde{\cG}_N(\widetilde{W},\beta)$ and hence $|\widetilde{\cG}_N(\widetilde{W},\beta)| \geq |{\cG}(W,\beta)|$. Then, the proof follows from \Tref{polar_thm1}.
\end{proof}

The upper bound $2^{-N^{\beta}}$ on the probability of error as in \Tref{polar_thm2} is still valid for the relaxed polar code constructed  with respect to $\widetilde{\cG}_N(\widetilde{W},\beta)$. Hence, \Tref{polar_thm3} shows that it is possible to construct  good-channel relaxed polar codes, which are still capacity achieving.

The remaining question is to actually compare the bit-error rate of relaxed polar code with that of Ar{\i}kan's polar code. Consider the special case when $\widetilde{\cG}_N(\widetilde{W},\beta) = \cG_N({W},\beta)$. Consider the channel ${W}^{(i)}_{N/2}$, then we have the following inequalities (the proof is provided in the Appendix.)
\eqar{  E\left(W_{N}^{(2i-1)}\right) &=&  2 E\left(W_{N/2}^{(i)}\right)  -  2 E\left(W_{N/2}^{(i)}\right)^2 \label{eq:Eineq1}\\
\label{eq:Eineq2}
 E\left(W_{N}^{(2i)}\right) &\geq& 2 E\left(W_{N/2}^{(i)}\right)^2  }
Consider a good-channel relaxed polar code with good-channel set  $\widetilde{\cG}_N(\widetilde{W},\beta)$, which is assumed to be equal to the good channel set of the fully polarized polar code, i.e., $\widetilde{\cG}_N(\widetilde{W},\beta) = \cG_N({W},\beta)$. Consider the last level of channel polarization e.g. channel ${W}^{(i)}_{N/2}$ and its children, $W_{N}^{(2i-1)}$ and $W_{N}^{(2i)}$, assuming that  ${W}^{(i)}_{N/2}$ is a relaxed node. Then, both indices $2i-1$ and $2i$ are contained in $\widetilde{\cG}_N(\widetilde{W},\beta)$ and $\cG_N({W},\beta)$. For the relaxed code, it follows that sum error probability of these two channels is $
E\left(\widetilde{W}_{N}^{(2i-1)}\right) +
E\left(\widetilde{W}_{N}^{(2i)}\right) = 2 E\left(\widetilde{W}^{(i)}_{N/2} \right) =
2 E\left({W}^{(i)}_{N/2} \right)$. Together with summing \eqref{eq:Eineq1} and \eqref{eq:Eineq2}, it follows that $2 E\left(\widetilde{W}^{(i)}_{N/2} \right) \leq E\left(W_{N}^{(2i-1)}\right) + E\left(W_{N}^{(2i)}\right)$.
Therefore, we have the following lemma.
\begin{lemma} \label{lem:error}
Let a good-channel relaxed polar code have a good-channel set  $\widetilde{\cG}_N(\widetilde{W},\beta)$, which is equal to the good channel set of the fully polarized polar code, i.e., $\widetilde{\cG}_N(\widetilde{W},\beta) = \cG_N({W},\beta)$, then
\begin{equation}
\label{eq:FER}
\sum_{i \in \widetilde{\cG}_N(\widetilde{W},\beta)} \! E\left(\widetilde{W}^{(i)}_N \right) \leq
\sum_{i \in \cG_N({W},\beta)} \!E\left({W}^{(i)}_N \right).
\end{equation}
\end{lemma}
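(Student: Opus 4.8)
The plan is to reduce the whole comparison to a single, level-independent inequality and then propagate it up the polarization tree. First I would add the two error recursions \eqref{eq:Eineq1} and \eqref{eq:Eineq2}: the degraded-child term $-2E(W_{N/2}^{(i)})^2$ and the upgraded-child term $+2E(W_{N/2}^{(i)})^2$ cancel, leaving
\begin{equation}
\nonumber
E\!\left(W_{N}^{(2i-1)}\right) + E\!\left(W_{N}^{(2i)}\right) \;\geq\; 2\,E\!\left(W_{N/2}^{(i)}\right).
\end{equation}
This says that one step of full polarization never decreases the \emph{sum} of a node's two offspring error probabilities below twice the parent's error probability.

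Second, I would promote this one-step bound to a subtree statement by induction on the number of remaining polarization levels. Fix a node carrying channel $V$ at level $t$, and let $S_{t'}$ denote the sum of error probabilities of its descendants at level $t'$ under full polarization. Applying the one-step inequality to every node at level $t'$ gives $S_{t'+1} \geq 2 S_{t'}$, and chaining this from $t$ up to $n=\log N$ yields $S_n \geq 2^{\,n-t}\,E(V)$. In words: fully polarizing $V$ down to the leaf level produces $2^{\,n-t}$ bit-channels whose error probabilities sum to at least $2^{\,n-t}E(V)$.

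Third, I would use the hypothesis $\widetilde{\cG}_N(\widetilde{W},\beta)=\cG_N(W,\beta)$ to organize the bookkeeping. Since the relaxed and fully polarized trees are identical above any relaxation point, the channel at the root of a relaxed subtree rooted at level $t$ is exactly the fully polarized channel $V=W_{2^t}^{(j)}$, and in the relaxed code each of its $2^{\,n-t}$ leaves inherits $V$, contributing exactly $2^{\,n-t}E(V)$ to the left-hand side of \eqref{eq:FER}. The equality of the good sets is precisely what guarantees that all of these leaves also lie in $\cG_N(W,\beta)$, so the right-hand side of \eqref{eq:FER} genuinely contains the entire subtree sum $S_n \geq 2^{\,n-t}E(V)$. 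I would then partition $[N]$ into the disjoint leaf sets of the maximal relaxed subtrees together with the fully polarized leaves. On the fully polarized leaves $\widetilde{W}_N^{(i)}=W_N^{(i)}$, so those terms cancel between the two sides; on each relaxed subtree the left side equals $2^{\,n-t}E(V)$ while the right side is at least $2^{\,n-t}E(V)$ by the second step. Summing over all relaxed subtrees delivers \eqref{eq:FER}.

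The main obstacle is the bookkeeping in the third step rather than any hard estimate. I must verify that the partition into maximal relaxed subtrees is well defined (a node is a relaxation root only if no proper ancestor is, so these subtrees are disjoint and cover every relaxed leaf) and, crucially, that the equality hypothesis forces every relaxed leaf into $\cG_N(W,\beta)$, so that no leaf of a relaxed subtree is silently omitted from the right-hand sum — otherwise the per-subtree inequality $S_n \geq 2^{\,n-t}E(V)$ could fail to be available. Once that is secured, the per-subtree bound from the induction closes the argument.
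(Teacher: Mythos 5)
Your proof is correct and follows essentially the same route as the paper's: both hinge on summing \eqref{eq:Eineq1} and \eqref{eq:Eineq2} to obtain $E\left(W_{N}^{(2i-1)}\right)+E\left(W_{N}^{(2i)}\right)\geq 2E\left(W_{N/2}^{(i)}\right)$, noting that relaxed leaves contribute exactly twice (resp. $2^{n-t}$ times) the parent's error probability, and using the equal-good-sets hypothesis to match index sets on both sides. The only difference is completeness: the paper writes out the argument only for a relaxed node at the last polarization level, whereas your induction $S_{t'+1}\geq 2S_{t'}$ and the partition into maximal relaxed subtrees make explicit the bookkeeping needed when relaxation occurs at arbitrary levels.
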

Note that the left hand side of \eqref{eq:FER} is the union bound on the frame error probability (FER) of the constructed relaxed polar code while the bound is very tight for low FERs. Similarly, the right hand side of \eqref{eq:FER} is the union bound on the frame error probability (FER) of the constructed relaxed polar code which is also very tight for low FERs. Hence, we conclude that the relaxed polar code is expected to perform better than the fully polarized codes in terms of frame error rate. This will be verified in Section\,\ref{sec:AWGN}.

\begin{remark}
The concept of good-channel relaxed polarization, discussed so far, can be extended to \emph{bad-channel relaxed polarization} as follows. Consider the bit-channels in the polarization tree, at a level $t <n$, that are \emph{very bad}. A bit-channel can be considered very bad if its Bhattacharyya parameter is very close to $1$, or if none of its descendents will fall into the set of good bit-channels at the last level of polarization $n$. Hence, the polarization of very bad bit-channels can be stopped without affecting the final set of good bit-channels. Thus, with careful bad-channel relaxed polarization, more complexity reductions can be possible, without degrading the code rate and error performance.
\end{remark}
\begin{remark} The obvious advantage of relaxed polarization is savings in both encoding and decoding complexities, since there will be no channel transformations done at relaxed nodes while encoding, and there will be no need to calculate new likelihood ratios (LRs) at relaxed nodes while decoding. Hence, relaxed polarization will result in a reduction in the encoding and decoding computational complexities of polar codes. Another advantage of relaxed polarization is the reduction in space (area and memory) complexity in practical implementations. That is because decoding of relaxed polar codes will result in smaller LRs (or log-likelihood ratios (LLRs) in case the computation is done in the log domain \cite{Gross2}) than that for fully polarized polar codes, and hence smaller bit-width will be required for LR calculation and storage. Relaxed polarization has the effect of reducing the number of processing nodes required at lower levels of the polarization tree, and hence one can expect even more efficient implementations (or less throughput penalty) with semi-parallel hardware architectures \cite{Gross1}. Also, by appropriate permutation of the bit channels, one expects to be able to eliminate the wiring for the wider butterflies in FFT-like SC decoders for relaxed polar codes.  The reduction in bit-width and number of processing nodes required for relaxed polar code decoders has the compound effect of reduction in power consumption.
\end{remark}
The reduction in encoding and decoding complexity will be addressed in the following section.

%=======================================================================%
%                                                                       %
%     3. Asymptotic Analysis of Complexity Reduction                    %
%                                                                       %
%=======================================================================%

\section{Asymptotic Analysis of Complexity Reduction}
\label{sec:sec3}

In this section, we establish bounds on the asymptotic  complexity reduction (as the code's block length goes to infinity) in polar code encoders and decoders, made possible by relaxed polarization.

First, we elaborate the notion of complexity reduction. For Ar{\i}kan's polar codes, the total number of channel polarization operations required using Ar{\i}kan's butterfly polarization structure, is $$A(n) = nN,$$ where $N = 2^n$ is the length of the code. As a result, the encoding procedure consists of $nN$ binary XOR operations and decoding procedure consists of $nN$ LLR combinations. Therefore, each skipped polarization operation is equivalent to one \emph{unit} of complexity reduction in both encoding and decoding, where the unit corresponds to a binary XOR when encoding and an LLR combining operation when decoding.
The complexity reduction $\mathcal{R}(n)$ is defined to be the ratio of the number of polarization operations that are skipped due to relaxed polarization  to the total number of polarization operations, $A(n)$, required for full polarization. The complexity reduction (CR) can be directly translated into encoding and decoding complexity ratios of  $(1-\mathcal{R}(n))^{-1}$.

For the asymptotic analysis throughout this section, a family of capacity-achieving polar codes is assumed which is constructed with respect to a fixed parameter $\beta\, <\, \shalf$ and the set of good bit-channels $\cG_N({W},\beta)$, for any block length $N=2^n$.

\begin{theorem}
\label{asymp-bound1}
Let $C(W)$ be the capacity of the channel $W$. Then, for any $\epsilon > 0$, small enough $\delta > 0$, and large enough $N$, the complexity reduction ratio using the relaxed polar code, constructed with $\widetilde{\cG}_N(\widetilde{W},\beta)$, is at least
$$
(C - \epsilon)\bigl(1 - (2+\delta)\beta \bigr)
$$
\end{theorem}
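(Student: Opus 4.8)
The plan is to set up a careful bookkeeping of which polarization operations are skipped, and then to bound their number by locating good bit-channels at a well-chosen intermediate level. First I would record the recursive structure behind $A(n)=nN$: since $A(n)=2A(n-1)+2^{n}$, the sub-tree of the polarization tree rooted at any node at level $t$ is itself a length-$2^{n-t}$ polar transform and hence carries exactly $A(n-t)=(n-t)2^{n-t}$ operations. Therefore, as soon as a node at level $t$ is declared a relaxed node, all $(n-t)2^{n-t}$ operations in its sub-tree are skipped, and the sub-trees hanging from distinct nodes at a common level involve disjoint sets of operations.

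The core step is to exhibit many sub-trees that are certainly skipped. Fix an intermediate level $t^{*}<n$ and a second exponent $\beta'<\shalf$, both chosen below. I claim every index $i\in\cG_{2^{t^{*}}}(W,\beta')$ contributes a skipped sub-tree: by definition such an $i$ satisfies $Z(W^{(i)}_{2^{t^{*}}})<2^{-2^{t^{*}\beta'}}/2^{t^{*}}$, which — once $t^{*}$ is large enough that $2^{-2^{t^{*}\beta'}}/2^{t^{*}}\le 2^{-N^{\beta}}/N$ — means $i$ is already good for the \emph{target} length $N$. Hence the first ancestor of $i$ (possibly $i$ itself) whose Bhattacharyya parameter falls below the target threshold is a relaxed node, $i$ lies in its sub-tree, and so the whole sub-tree below $i$, of size $(n-t^{*})2^{n-t^{*}}$, is skipped. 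Summing the disjoint contributions gives
$$
\mathcal{R}(n)\ \ge\ \frac{\bigl|\cG_{2^{t^{*}}}(W,\beta')\bigr|\,(n-t^{*})2^{n-t^{*}}}{nN}\ =\ \frac{\bigl|\cG_{2^{t^{*}}}(W,\beta')\bigr|}{2^{t^{*}}}\cdot\frac{n-t^{*}}{n}.
$$

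It remains to choose $t^{*}$ and $\beta'$ and to invoke \Tref{polar_thm1} at block length $2^{t^{*}}$, which yields $|\cG_{2^{t^{*}}}(W,\beta')|/2^{t^{*}}\to C$. The quantitative constraint is the threshold comparison $2^{-2^{t^{*}\beta'}}/2^{t^{*}}\le 2^{-N^{\beta}}/N$; a short computation shows $t^{*}=\lceil (n\beta+1)/\beta'\rceil$ suffices for all large $n$. Taking $\beta'=1/(2+\delta/2)$, which is admissible precisely because it is $<\shalf$, gives $t^{*}/n\to(2+\delta/2)\beta$, so for $n$ large we have both $t^{*}/n\le(2+\delta)\beta$ and $|\cG_{2^{t^{*}}}(W,\beta')|/2^{t^{*}}\ge C-\epsilon$. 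Substituting into the displayed inequality yields $\mathcal{R}(n)\ge (C-\epsilon)\bigl(1-(2+\delta)\beta\bigr)$, as claimed.

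The main obstacle, and the place where both the factor $2$ and the slack $\delta$ are forced, is the interplay of the two reliability thresholds. Because \Tref{polar_thm1} only delivers a capacity fraction of good channels when the exponent is strictly below $\shalf$, the earliest sub-linear level at which a capacity fraction has already reached the target reliability $2^{-N^{\beta}}$ is $t^{*}\approx 2\beta n$; the barrier $\beta'<\shalf$ is exactly what prevents pushing $t^{*}$ below $2\beta n$ and what turns $2\beta$ into $(2+\delta)\beta$. I would also have to verify the subsidiary facts that $t^{*}<n$ (which needs $(2+\delta)\beta<1$, hence ``small enough $\delta$''), that the counted sub-trees are genuinely disjoint, and that identifying relaxed nodes through the full-polarization parameters $Z(W^{(i)}_{2^{t^{*}}})$ is legitimate even though below a relaxed node the relaxed code merely copies the parent channel.
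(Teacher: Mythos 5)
Your proof is correct and follows essentially the same route as the paper's: both arguments fix an intermediate polarization level of order $(2+\delta)\beta n$, invoke Theorem~\ref{polar_thm1} at that level with an exponent slightly below $1/2$ to obtain a $(C-\epsilon)$-fraction of nodes already meeting the target-length reliability threshold, and count the $(n-t^{*})2^{n-t^{*}}$ skipped operations in the disjoint subtrees below those nodes. Your choices $\beta'=1/(2+\delta/2)$ and $t^{*}=\lceil(n\beta+1)/\beta'\rceil$ simply build in the slack needed to verify the threshold comparison $2^{-2^{t^{*}\beta'}}/2^{t^{*}}\le 2^{-N^{\beta}}/N$ explicitly, together with the first-ancestor and disjointness details, which the paper's proof (taking level $\lceil(2+\delta)\beta n\rceil$ and exponent $1/(2+\delta)$) leaves implicit.
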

\begin{proof}
Pick a fixed $\delta$ such that $0< \delta < 1/\beta - 2$. Consider the polarization level $\left\lceil(2+\delta)\beta n \right\rceil$. Let $N' = 2 ^{\left\lceil(2+\delta)\beta n \right\rceil}$ be the total number of nodes at this level. Then for large enough $n$, the nodes at this level with index belonging to the set $\cG_{N'}({W},1/(2+\delta))$ will be relaxed. Notice that the fraction of these nodes, i.e. $\left|\cG_{N'}({W},1/(2+\delta))\right| / N'$, approaches the capacity $C$ by \Tref{polar_thm1}. The fraction of bit-channel polarizations between the level $\left\lceil(2+\delta)\beta n \right\rceil$ and the last level $n$ is $(1 - (2+\delta)\beta \bigr)$ of the total $nN$, among which a fraction of $C - \epsilon$ of them are relaxed, for large enough $n$. Therefore, the complexity reduction will be at least $(C - \epsilon)\bigl(1 - (2+\delta)\beta \bigr)$.
\end{proof}

In the next theorem, a bound on the asymptotic complexity reduction using the bad-channel relaxed polarization is provided. The following scenario is considered for bad-channel relaxed polarization: if none of the descendants of a certain node will belong to $\cG_N(W, \beta)$, then the polarization at that node, and consequently all of its descendants, will be relaxed.

\begin{theorem}
\label{asymp-bound2}
For any $\epsilon, \delta > 0$ and large enough $N = 2^n$, the complexity reduction ratio using the bad-channel relaxed polarization is at least
$$
(1-C-\epsilon)\left(1 - \frac{(2+\delta) \log n}{n}\right)
$$
\end{theorem}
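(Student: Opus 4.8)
The plan is to mirror the structure of the proof of Theorem~\ref{asymp-bound1}, but now counting the polarization operations that are skipped because entire subtrees rooted at \emph{very bad} nodes are relaxed. The key quantity to control is the number of nodes at a suitably chosen intermediate level $\ell$ whose entire descendant subtree lies outside the good set $\cG_N(W,\beta)$; each such node lets us skip all of the polarization operations in the $n-\ell$ levels below it. First I would fix the intermediate level $\ell = \left\lceil (2+\delta)\log n \right\rceil$. The point of this choice is that a node at level $\ell$ has $2^{n-\ell} = N/2^{\ell}$ descendants at the final level, and with $2^{\ell}$ growing like $n^{2+\delta}$, the ``granularity'' $2^{-\ell}$ of the level-$\ell$ partition is fine enough that the bad fraction at this level already approximates $1-C$, yet the skipped fraction of operations per bad node, roughly $(n-\ell)/n = 1 - \ell/n$, is still close to $1$.

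Next I would argue that, for large $n$, a fraction at least $(1-C-\epsilon)$ of the $2^{\ell}$ nodes at level $\ell$ are \emph{very bad} in the sense that none of their descendants enters $\cG_N(W,\beta)$. The natural route is to invoke Corollary~\ref{cor-hassani}: the fraction of bad bit-channels at level $\ell$ tends to $1-C(W)$, so by choosing $n$ large enough this fraction exceeds $1-C-\epsilon$. One must be careful that ``bad at level $\ell$'' (i.e.\ $Z(\widetilde W^{(i)}_{2^{\ell}})$ very close to $1$) implies ``no descendant is good at level $n$,'' which is exactly the condition the theorem's preamble uses to trigger bad-channel relaxation. This follows from monotonicity of the Bhattacharyya recursions \eqref{Z_recursion1}--\eqref{Z_recursion2}: a node with $Z$ close to $1$ cannot have a descendant with $Z < 2^{-N^{\beta}}/N$ within only $n-\ell$ further polarization steps, since each step can drive $Z$ toward $0$ at most by squaring, and $2^{n-\ell}$ squarings starting from a value bounded away from $0$ cannot reach the tiny good-channel threshold once $\ell$ is of order $\log n$.

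Having established that at least $(1-C-\epsilon)\,2^{\ell}$ of the level-$\ell$ nodes are relaxable, I would then count operations. The total number of polarization operations is $A(n) = nN$. The operations strictly below level $\ell$ that are skipped when a level-$\ell$ subtree is fully relaxed number $2^{n-\ell}(n-\ell)$ per such node, and there are at least $(1-C-\epsilon)\,2^{\ell}$ of them, giving a skipped count of at least $(1-C-\epsilon)\,2^{\ell}\cdot 2^{n-\ell}(n-\ell) = (1-C-\epsilon)\,N\,(n-\ell)$. Dividing by $A(n)=nN$ yields a complexity reduction of at least
\begin{equation*}
(1-C-\epsilon)\,\frac{n-\ell}{n}
= (1-C-\epsilon)\left(1 - \frac{\left\lceil(2+\delta)\log n\right\rceil}{n}\right),
\end{equation*}
and absorbing the ceiling into the constant $\delta$ (valid for large $n$) gives the claimed bound $(1-C-\epsilon)\bigl(1 - (2+\delta)\tfrac{\log n}{n}\bigr)$.

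I expect the main obstacle to be the step linking ``$Z$ close to $1$ at level $\ell$'' to ``no good descendant at level $n$.'' The clean part is the counting, which is essentially bookkeeping on the binary tree, and the invocation of Corollary~\ref{cor-hassani} for the density of bad channels. The delicate part is verifying, via the recursions \eqref{Z_recursion1}--\eqref{Z_recursion2}, that the chosen level $\ell = \Theta(\log n)$ is simultaneously \emph{deep enough} that the bad-channel fraction has converged to within $\epsilon$ of $1-C$, and \emph{shallow enough} that the $2^{n-\ell}$ operations saved per subtree still constitute a fraction close to $1$ of the total; reconciling these two requirements is precisely what forces the $\log n$ scaling and is where the argument must be made quantitative.
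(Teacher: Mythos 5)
Your skeleton coincides with the paper's own proof: the same intermediate level $\ell=\left\lceil(2+\delta)\log n\right\rceil$, the same appeal to Corollary~\ref{cor-hassani} for the density of bad nodes at that level, and the same count of $2^{n-\ell}(n-\ell)$ skipped operations per relaxed node, giving $(1-C-\epsilon)(n-\ell)/n$. The genuine gap is in the step you yourself flag as the ``main obstacle'' and then resolve incorrectly: you claim that, since each polarization step can at best square the Bhattacharyya parameter, ``$2^{n-\ell}$ squarings starting from a value bounded away from $0$ cannot reach the tiny good-channel threshold.'' This is quantitatively false. By \eqref{Z_recursion2} the all-plus descendant of a level-$\ell$ node with parameter $Z$ has Bhattacharyya parameter exactly $Z^{2^{n-\ell}}$, and if $Z=c$ is any fixed constant in $(0,1)$ --- no matter how close to $1$ --- then, using $2^{n-\ell}\geq N/(2n^{2+\delta})$, this equals $2^{-\Theta(N/n^{2+\delta})}$, which lies far \emph{below} the good-channel threshold $2^{-N^{\beta}}/N$, because $\beta<\shalf$ makes $N^{\beta}+\log N \ll N/n^{2+\delta}$. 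So a node whose parameter is merely constant-close to $1$ can perfectly well have a good descendant; your monotonicity argument does not rule this out, and relaxing such a node could forfeit good channels.

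The repair --- and this is precisely what the paper's proof does, and the real reason the level scales as $(2+\delta)\log n$ --- is to apply Corollary~\ref{cor-hassani} at level $\ell$ with the specific parameter $\beta'=1/(2+\delta)<\shalf$. Then for large $n$ a fraction at least $1-C-\epsilon$ of the level-$\ell$ nodes satisfy $Z > 1-2^{-(2^{\ell})^{1/(2+\delta)}} \geq 1-2^{-n}$, where the last inequality holds because $\ell\geq(2+\delta)\log n$ forces $(2^{\ell})^{1/(2+\delta)}\geq n$. Being within $2^{-n}=1/N$ of $1$, rather than constant-close, is what kills all good descendants: the best descendant then has parameter at least $Z^{2^{n-\ell}} \geq (1-2^{-n})^{2^{n}}$, which is bounded below by an absolute constant (at least $1/4$; the paper writes $>1/e$), hence vastly exceeds $2^{-N^{\beta}}/N$ and can never satisfy the good-channel criterion. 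In other words, the $\log n$ scaling is not driven only by the two considerations you name (convergence of the bad fraction, and keeping $(n-\ell)/n$ near $1$) but chiefly by the need to push the corollary's badness threshold up to $1-2^{-n}$. With this single change in how the corollary is invoked, the rest of your counting goes through verbatim.
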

\begin{proof}
Consider the polarization level $t = \left\lceil (2+\delta) \log n \right\rceil$. Then by \Cref{cor-hassani}, for large enough $n$, the fraction of nodes with Bhattacharyya parameter at least $1 - 2^{-2^{t/(2+\delta)}} \geq 1 - 2^{-n}$ is at least $1 - C - \epsilon$. Consider such a node $V$ with Bhattacharyya parameter $Z \geq 1 - 2 ^{-n}$. Then the best descendant of $V$ at the last level of polarization has Bhattacharyya parameter
$$
Z ^ {2^{n-t}} >  Z^{2^n} > (1 - 2^{-n})^{2^n} > \frac{1}{e},
$$
which implies that it can not be a good-bit-channel. Therefore, the total fraction of complexity reduction is at least
$$
(1-C-\epsilon)\frac{n-t}{n},
$$
and the theorem follows.
\end{proof}

Observe that, by neglecting $\epsilon$ and $\delta$ in the bounds given in \Tref{asymp-bound1} and \Tref{asymp-bound2} and by assuming large enough $n$, the complexity reduction ratio from good-channel and bad-channel relaxed polarization is $C(1-2\beta)$ and $1-C$, respectively, which are both positive constant factors.  By combining both good and bad channel relaxation, the ratio of saved operations approaches $1 - 2 \beta C$. Hence, relaxed polarization can provide a non-vanishing scalar reduction in complexity, even as the code length grows infinitely.

%=======================================================================%
%                                                                       %
%     4. Finite Length Analysis of Complexity Reduction                 %
%                                                                       %
%=======================================================================%

\section{Finite Length Analysis of Complexity Reduction}
\label{sec:sec4}

In this section, we derive bounds on the complexity reduction resulted from good-channel and bad-channel relaxed polarization at finite block lengths.
\subsection{Relaxed polar code constructions using Bhattacharyya parameters \label{subsec: BP}}
In general, finite block length polar codes are constructed by fixing either a target frame error probability (FER) $E$ or target code rate $R$. We consider construction of polar codes with code length $N=2^n$, at a target FER of $E$.
To simplify notation, let $Z_{i,t} \Def Z(\widetilde{W}^{(i)}_{2^t})$. At finite block lengths, we need to specify certain thresholds for Bhattacharyya parameters in order to establish criteria for good-channel and bad-channel relaxed polarization. As a result, the following scenarios are considered for relaxed polarization:

 \begin{enumerate}
 \item Good-Channel Relaxed Polarization (GC-RP):\\  Node $i$ at polarization level $t$ is not further polarized if
 $Z_{i,t} <\mathcal{T}_g$
 \item Bad-Channel Relaxed Polarization (BC-RP): \\
 Node $i$ at polarization level $t$ is not further polarized if
 $Z_{i,t} >\mathcal{T}_b $
 \item All-Channel Relaxed Polarization (AC-RP): \\
 Node $i$ at polarization level $t$ is not further polarized if
 $Z_{i,t} <\mathcal{T}_g$ or $Z_{i,t} >\mathcal{T}_b$
  \end{enumerate}
where $\mathcal{T}_g$ and $\mathcal{T}_b$ are thresholds that can be considered as parameters of the construction.

\begin{remark} If $\mathcal{T}_g = 2E/N$, then GC-RP constructed codes satisfy the target FER $E$. Let $\Gamma$ be the set of \emph{good} bit-channel channels which are used to transmit the information bits.
Then, this can be observed by \eqref{BP-Err} and the fact that $|\Gamma| \leq N$, which gives the following
\eq{\mbox{FER} \leq \sum_{i \in \Gamma}  Z_{i,n}/2 \leq | \Gamma|\, E/N \leq E.}
\label{rem:goodT}
\end{remark}
In the proposed bad-channel relaxed polarization (BC-RP), the bad channels are not further polarized if they become sufficiently bad, where the bad-channel relaxation threshold can be set to be $\mathcal{T}_b = 1-\mathcal{T}_g$. To guarantee no rate loss from BC-RP, it should only be performed if $n-t \leq \lceil \log_2 \frac{\log_ 2 \mathcal{T}_g}{\log_2 \mathcal{T}_b} \rceil$.
The proposed all-channel relaxed polarization (AC-RP) relaxes the polarization of a bit-channel if it becomes either sufficiently good or bad. Since the bad bit-channels do not contribute to the FER, the target FER is still maintained with AC-RP.

 \begin{figure}[t!]
\centering
\includegraphics[width=\linewidth]{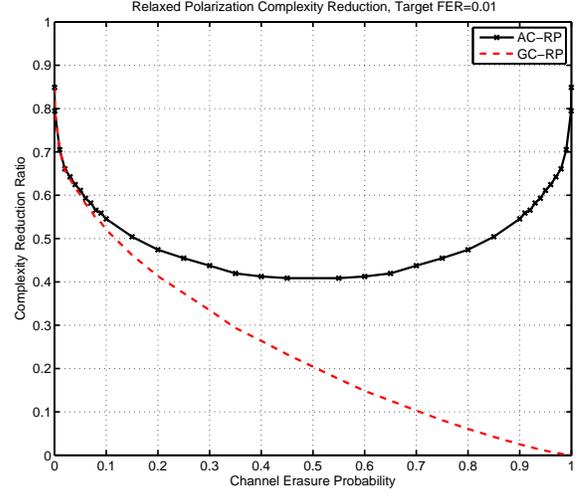}%{relax_CR_L1000em5.eps} %{BLCC.eps}\\%{LM2SMLM.eps}\\
\caption{Complexity reduction via relaxed polarization on erasure channels at $n=20$}
\label{fig:CR_erasure}
\end{figure}

In \figref{fig:CR_erasure}, the achieved complexity reduction ratio for a binary erasure channel with erasure probability $p$, BEC($p$), is shown. It is observed that up to 85$\%$ CR is achievable, i.e. fully polarized (FP) code requires 6.6-fold the complexity of RP code. AC-RP will result in more complexity reduction than GC-RP  as the channel becomes worse. The rate-loss is calculated as
 \eq{\label{eq:RL} R_{Loss} = R_{FP} - R_{RP}, }
 where $R_{FP}$ and $R_{RP}$ are the rates of the codes which are constructed by full and relaxed polarization, respectively. The rate at a target FER $E$ is calculated by aggregating the maximum number of bit-channels such that their accumulated BPs does not exceed  $2E$. In this simulation result shown in \figref{fig:CR_erasure}, the rate loss is always less than $10^{-4}$.
Another important observation, from \figref{fig:CR_erasure}, is the symmetry of the CR curve around $p=0.5$. This is explained by the following Theorem \ref{Th:duality}, which is a direct result of Lemma \ref{Lem:mapping} and the description of GC-RP and BC-RP.

 \begin{figure}[t!]
\centering
\includegraphics[width=0.8\linewidth]{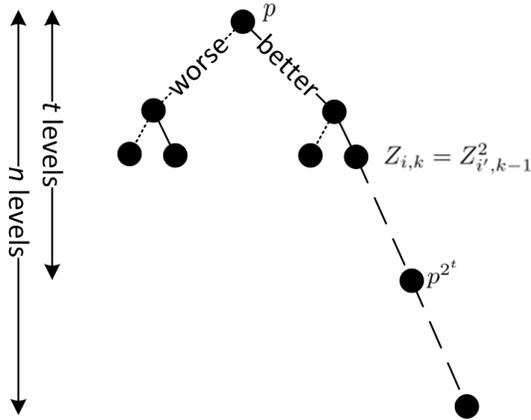} %{BLCC.eps}\\%{LM2SMLM.eps}\\
\caption{Upper bound (UB) on Complexity Reduction.}
\label{fig:P1}
%\vspace{-0.5cm}
\end{figure}

\begin{theorem} \label{Th:duality}
The complexity reduction with bad-channel relaxed polarization  of BEC($p$) with threshold $1 - \cT$ is the same as that of good-channel relaxed polarization of BEC($1-p$) with threshold $\cT$.
\end{theorem}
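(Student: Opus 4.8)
The plan is to reduce the claim to the self-complementary structure of erasure-channel polarization together with a level-preserving tree automorphism, so that no real estimate is needed. First I would recall that for a BEC the recursions \eqref{Z_recursion1}--\eqref{Z_recursion2} hold with equality, so every bit-channel of BEC($p$) is again a BEC whose erasure probability is obtained by applying the two maps $f_-(z)=2z-z^2$ and $f_+(z)=z^2$ along the sign sequence that labels the node's position in the polarization tree. The elementary identities $1-f_-(z)=(1-z)^2=f_+(1-z)$ and $1-f_+(z)=1-z^2=f_-(1-z)$ show that complementing the erasure probability interchanges the roles of the ``$-$'' (bad) and ``$+$'' (good) branches. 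This is precisely the content I expect from Lemma \ref{Lem:mapping}: writing $Z_p[s]$ for the Bhattacharyya parameter of the node of BEC($p$) reached by the sign sequence $s$, and $\bar s$ for the sequence obtained by flipping every sign, one has $Z_p[s]=1-Z_{1-p}[\bar s]$ at every level $t$.

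Granting this, the core step is to match the relaxation (pruning) decisions under the map $s\mapsto\bar s$, which is exactly the tree automorphism $\sigma$ that swaps the left and right child of every node. In BC-RP of BEC($p$) a node $s$ at level $t$ is relaxed precisely when $Z_p[s] > 1-\cT$, whereas in GC-RP of BEC($1-p$) a node is relaxed precisely when its parameter is below $\cT$. Using $Z_p[s]=1-Z_{1-p}[\bar s]$, the inequality $Z_p[s] > 1-\cT$ is equivalent to $Z_{1-p}[\bar s] < \cT$, so $\sigma$ sends a relaxed node of the BC-RP construction to a relaxed node of the GC-RP construction, and this equivalence holds simultaneously at every node along each root-to-leaf path. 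Since relaxation prunes an entire subtree, what actually matters is where the condition first triggers along a path; because the triggering condition agrees node-by-node under $\sigma$, the shallowest pruning point on a path is carried to the shallowest pruning point on the $\sigma$-image path, preserving its level $t$.

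Finally I would convert this level-preserving bijection of pruning points into equality of complexity reduction. By the regular butterfly structure, $\sigma$ is a symmetry of the operation count: it permutes the polarization operations of the length-$N$ transform among themselves and maps each pruned level-$t$ subtree isomorphically onto another pruned level-$t$ subtree, hence skipping the same number of operations. Consequently the two constructions skip exactly the same number of operations out of the common total $A(n)=nN$, which gives identical complexity reductions $\mathcal{R}(n)$ and, in particular, the symmetry of the curve about $p=\shalf$ observed in \figref{fig:CR_erasure}. The main difficulty here is bookkeeping rather than any hard inequality: I must verify that $\sigma$ genuinely matches \emph{maximal} pruned subtrees (not merely individual relaxed nodes) by respecting the order in which the pruning tests are applied, and that the per-level savings is position-independent so that equal per-level pruning counts yield equal total savings.
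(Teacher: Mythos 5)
Your proposal is correct and takes essentially the same route as the paper: your identities $1-f_-(z)=f_+(1-z)$ and $1-f_+(z)=f_-(1-z)$ plus induction along the sign sequence are exactly how the paper proves Lemma \ref{Lem:mapping} (the mirroring isomorphism with $Z\mapsto 1-Z$), and the paper then obtains Theorem \ref{Th:duality} as a direct consequence of that lemma together with the swapped threshold conditions for GC-RP and BC-RP. Your extra bookkeeping about matching maximal pruned subtrees and the position-independent per-level operation count merely spells out what the paper leaves implicit in calling the theorem a ``direct result.''
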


\begin{lemma} \label{Lem:mapping}
\label{lem:map}
Let the nodes in the polarization tree be labeled by their BPs. Then, the  polarization tree for BEC($p$) and the polarization tree for BEC($1-p$) are isomorphic, where a node $V$ with BP $Z$ in the first tree is isomorphic to a node with BP $1-Z$ in the second tree.
\end{lemma}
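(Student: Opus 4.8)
The plan is to build an explicit rooted-tree isomorphism $\varphi$ between the two polarization trees and then verify, by induction on the polarization level, that it sends a node of BP $Z$ to a node of BP $1-Z$. The starting observation is that for a binary erasure channel the recursions \eqref{Z_recursion1}--\eqref{Z_recursion2} hold with equality, so a node labeled by BP $z$ has a left (``minus'') child labeled $2z-z^2$ and a right (``plus'') child labeled $z^2$; moreover BEC($p$) has $Z(W)=p$, so the roots of the BEC($p$) and BEC($1-p$) trees carry labels $p$ and $1-p$, respectively.

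Next I would define $\varphi$ as the mirror map: it sends the root of the BEC($p$) tree to the root of the BEC($1-p$) tree, and whenever $\varphi(V)=V'$ it sends the left child of $V$ to the right child of $V'$ and the right child of $V$ to the left child of $V'$. Since both trees are complete binary trees of the same depth $n$, this $\varphi$ is manifestly a bijection preserving the parent--child relation, hence a rooted-tree isomorphism; the only remaining content is the claim about the labels.

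The crux is the pair of algebraic identities expressing that the complement map $\sigma(z)=1-z$ interchanges the minus and plus transformations. Writing $f_-(z)=2z-z^2$ and $f_+(z)=z^2$, one checks $1-f_-(z)=(1-z)^2=f_+(1-z)$ and $1-f_+(z)=1-z^2=2(1-z)-(1-z)^2=f_-(1-z)$. I would then prove $Z(\varphi(V))=1-Z(V)$ by induction on depth. The base case is the roots, where $1-p=1-Z$. For the inductive step, suppose $Z(\varphi(V))=1-Z(V)$ and set $z=Z(V)$. The left child $V_L$ of $V$ has $Z(V_L)=f_-(z)$ and is mapped by $\varphi$ to the right child of $\varphi(V)$, whose label is $f_+(1-z)$; by the first identity this equals $1-f_-(z)=1-Z(V_L)$. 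Symmetrically, the right child $V_R$ has $Z(V_R)=f_+(z)$ and maps to the left child of $\varphi(V)$, labeled $f_-(1-z)=1-f_+(z)=1-Z(V_R)$ by the second identity. This closes the induction and establishes the lemma.

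I expect no serious obstacle: the whole argument rests on the two elementary identities above together with the exactness of the BEC recursions. The one point that must be handled with care---and the reason a naive ``identity on tree positions'' map fails---is that $\sigma$ swaps the roles of the upgraded and degraded children, so $\varphi$ must reverse left and right at every level rather than preserve the ordering. Keeping this child-swap consistent with the $f_{\pm}$ identities is the only place where a labeling error could creep in.
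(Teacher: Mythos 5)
Your proposal is correct and matches the paper's own argument: both prove the lemma via the mirror map (reversing the $+$/$-$ sequence, i.e.\ swapping left and right children at every level) together with the two complement identities $1-(2z-z^2)=(1-z)^2$ and $1-z^2=2(1-z)-(1-z)^2$, followed by induction on the polarization level. Your write-up is somewhat more explicit about the induction hypothesis and the bijectivity of the mirror map, but the substance is identical.
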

\begin{proof}
We show that the one-to-one mapping is nothing but mirroring i.e. the $i$-th node at the polarization level $t$ will be mapped to the node indexed by $2^t - i$ at the same level. It is sufficient to show this for one polarization level and then the rest follows from induction. Let $W_1 = \text{BEC}(p)$ and $W_2 = \text{BEC}(1-p)$. Then
$$
Z(W_2^{+}) = (1 - p)^2= 1 - (2p - p^2) = 1 - Z(W_1^{-})
$$
And
$$
Z(W_2^{-}) = 2(1-p) - (1-p)^2 = 1 - p^2 = 1 - Z(W_1^{+})
$$
Therefore, by induction on the polarization level, it is shown that each polarized node $V$ in the polarization tree of $W_1$ can be mapped to the polarization tree of $W_2$ by reversing the sequence of $+$'s and $-$'s during its polarization. Furthermore, the BP $Z$ of $V$ will be mapped to BP $1-Z$ at the image of $V$.
\end{proof}

\subsection{Analysis of complexity reduction for GC-RP}

In this subsection, bounds on the complexity reduction from good-channel relaxed polarization are discussed.
Let $\mathcal{T} = \mathcal{T}_g$. In the next theorem, a simple upper bound is provided, which is also illustrated in \figref{fig:P1}.

\begin{theorem}\label{Th:GCRPUB} The good-channel relaxed polarization complexity reduction is upper bounded by
\eq{\mathcal{R}_g(n) \leq (n-t_g)/n,}
where $t_g = \left \lceil \log_2 \frac{\log_2 \mathcal{T}}{\log_2 p}\right \rceil$ and $p$ is the erasure channel parameter. \end{theorem}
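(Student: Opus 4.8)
The plan is to locate the earliest polarization level at which any bit-channel of BEC($p$) can meet the good-channel relaxation criterion $Z_{i,t} < \mathcal{T}$, and then to argue that nothing can be saved below that level. Since the total operation count $A(n) = nN$ splits uniformly into $N$ units per level, bounding the skippable operations by those strictly above the threshold level gives the stated ratio immediately.

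First I would track, for each level $t$, the smallest Bhattacharyya parameter attainable over all nodes at that level. On BEC($p$) the recursions \eqref{Z_recursion1} and \eqref{Z_recursion2} hold with equality, so a node with parameter $Z$ produces an upgraded child with parameter $Z^2$ and a degraded child with parameter $2Z - Z^2$. Both maps $Z \mapsto Z^2$ and $Z \mapsto 2Z - Z^2$ are increasing on $[0,1]$, and $Z^2 \leq 2Z - Z^2$ there (equivalent to $Z \leq 1$); hence the smallest parameter at level $t$ is obtained by following the upgrade branch at every step. Starting from the root value $p$, repeated squaring gives $\min_i Z_{i,t} = p^{2^t}$. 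Next I would find the first level at which this minimum falls below $\mathcal{T}$: solving $p^{2^t} < \mathcal{T}$ and taking logarithms (using $\log_2 p < 0$ and $\log_2 \mathcal{T} < 0$, so their ratio is positive) yields $2^t > \log_2 \mathcal{T} / \log_2 p$, i.e. $t > \log_2 ( \log_2 \mathcal{T} / \log_2 p )$. The least such integer is exactly $t_g = \lceil \log_2 ( \log_2 \mathcal{T} / \log_2 p ) \rceil$, so for every $t < t_g$ no node at level $t$ satisfies the GC-RP condition.

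It follows that none of the polarization operations producing levels $1, \ldots, t_g$ can be relaxed, so all $t_g N$ of those units are performed; at most the $(n - t_g)N$ units feeding the remaining levels may be skipped. Dividing by $A(n) = nN$ gives $\mathcal{R}_g(n) \leq (n - t_g)/n$, the bound depicted in \figref{fig:P1}. The one step deserving care is the claim that the level-wise minimum is achieved along the all-upgrade path; this is exactly the monotonicity-plus-domination property of the two child maps noted above, after which the remainder is a uniform per-level counting argument with no further subtlety.
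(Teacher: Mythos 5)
Your proof is correct and takes essentially the same route as the paper's: both identify the all-upgrade (right-most) node as attaining the minimum Bhattacharyya parameter $p^{2^t}$ at level $t$, conclude that no node below level $t_g$ can satisfy $Z_{i,t} < \mathcal{T}$, and then bound the skippable operations by the $(n-t_g)N$ units of the remaining levels. The only difference is that you explicitly justify the minimality claim via monotonicity of $Z \mapsto Z^2$ and $Z \mapsto 2Z - Z^2$ together with $Z^2 \leq 2Z - Z^2$, a step the paper simply asserts.
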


\begin{proof} The upper bound follows by considering the minimum number of polarization levels required for the best polarized channel to reach the threshold $\mathcal{T}$. Notice that $Z_{1,0} = p$. Then, after $t$ polarization levels, the minimum BP among all $Z_{i,t}$ is indeed $Z_{2^t,t} = p^{2^{t}}$. Hence, $t_g$ polarization levels are required for the BP of at least one bit channel to be less than $\mathcal{T}$. The upper bound on saved operations follows by skipping all polarization steps at all remaining $n-t_g$ levels. \end{proof}

Next, we derive lower bounds on the complexity reduction with relaxed polarization for $BEC(p)$. For any polarization level $t$ and some threshold $B$, let $\mathcal{G}_{B,t}$ denote the set of bit channels at polarization level $t$ with BP at most $B$ i.e. $\mathcal{G}_{B,t} = \{i: Z_{i,t} \leq B \}$.

 \begin{figure}[t!]
\centering
\includegraphics[width=0.8\linewidth]{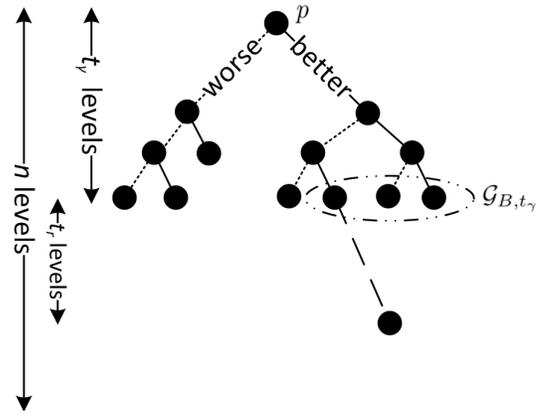} 
\caption{Lower bound (LB1) on Complexity Reduction.}
\label{fig:LB1}
\end{figure}

\begin{theorem} \label{Th:GCLB1} Let $t_{b} = \left \lceil \log_2  \frac{\log_2 B}{\log_2 p} \right \rceil$, for an arbitrary threshold $B$. For a polarization level $t_{\gamma} \geq t_b$, let $\gamma = |\mathcal{G}_{B,t_{\gamma}}|/2^{t_{\gamma}}$. Then the good-channel relaxed polarization complexity reduction is lower bounded by
\eqn{\mathcal{R}_g(n)  \geq  \gamma 2^{-t_r}(n-t_r-t_{\gamma})/n}
where $t_r= \left \lceil \log_2 \frac{\log_2 \mathcal{T}}{\log_2{B}} \right \rceil$. \end{theorem}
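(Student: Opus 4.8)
The plan is to exhibit an explicit collection of nodes that are guaranteed to be relaxed and to sum the operations saved by pruning their subtrees, mirroring the accounting used for the upper bound in \Tref{Th:GCRPUB}. First I would fix the cost model. Relaxing a node $V$ at polarization level $s$ prunes the entire subtree below $V$, which in the recursive SC structure is exactly a length-$2^{n-s}$ subproblem costing $(n-s)2^{n-s}$ operations out of the total $A(n)=n2^n$; hence each relaxed node at level $s$ contributes a saved fraction of $\frac{n-s}{n}2^{-s}$. As a consistency check, relaxing all $2^{t}$ nodes at some level $t$ gives $2^{t}\cdot\frac{n-t}{n}2^{-t}=\frac{n-t}{n}$, which recovers the upper-bound expression of \Tref{Th:GCRPUB}.

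Next I would build the relaxed nodes from the $|\mathcal{G}_{B,t_\gamma}|=\gamma 2^{t_\gamma}$ channels at level $t_\gamma$ whose BP is at most $B$. For each such channel I follow its best descendant, i.e. the repeated even-index (``$+$'') path. Since $W$ is a BEC, the recursion \eqref{Z_recursion2} holds with equality, so each such step squares the BP; after $t_r$ steps the BP is at most $B^{2^{t_r}}\le \mathcal{T}$ by the definition of $t_r$. Thus the descendant reached at level $t_\gamma+t_r$ satisfies the GC-RP criterion $Z<\mathcal{T}$ and is relaxed. If instead some ancestor on this path crossed $\mathcal{T}$ earlier, that node is relaxed and only a larger subtree is pruned; in either case the whole subtree rooted at the level-$(t_\gamma+t_r)$ node is skipped.

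Finally I would sum the savings. Distinct BP-at-most-$B$ channels at level $t_\gamma$ root disjoint subtrees, so the best descendants they produce at level $t_\gamma+t_r$ are distinct and their pruned subtrees are pairwise disjoint, and no operation is double counted. Each such subtree is rooted at level $s=t_\gamma+t_r$ and hence contributes a saved fraction $\frac{n-t_\gamma-t_r}{n}2^{-(t_\gamma+t_r)}$; multiplying by the $\gamma 2^{t_\gamma}$ nodes gives $\gamma 2^{t_\gamma}\cdot\frac{n-t_\gamma-t_r}{n}2^{-(t_\gamma+t_r)}=\gamma 2^{-t_r}\frac{n-t_\gamma-t_r}{n}$, the claimed bound. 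The bound is non-trivial once $t_\gamma+t_r\le n$, and the hypothesis $t_\gamma\ge t_b$ guarantees that BP-at-most-$B$ channels exist (the best channel $p^{2^{t_\gamma}}$ already qualifies), so $\gamma$ is a meaningful count. The main obstacle I expect is the bookkeeping: pinning down the per-node saved fraction $\frac{n-s}{n}2^{-s}$ from the normalization $A(n)=nN$, and rigorously arguing disjointness together with the fact that relaxation at or before level $t_\gamma+t_r$ always removes the full subtree, so that the lower bound never over-counts.
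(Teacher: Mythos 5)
Your proposal is correct and follows essentially the same route as the paper's proof: identify the $\gamma 2^{t_\gamma}$ nodes of $\mathcal{G}_{B,t_\gamma}$, follow each one's right-most (repeated ``$+$'') descendant for $t_r$ squaring steps so its BP drops below $\mathcal{T}$, and credit each resulting relaxed node at level $t_\gamma+t_r$ with the $(n-t_r-t_\gamma)2^{n-t_r-t_\gamma}$ skipped operations of its pruned subtree, normalized by $A(n)=nN$. Your added remarks on disjointness of the pruned subtrees and on the case where an ancestor is relaxed earlier (yielding an even larger saving) are careful touches that the paper leaves implicit, but they do not change the argument.
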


\begin{proof}
Notice that $t_{\gamma} \geq t_b$ guarantees that $\mathcal{G}_{B,t_{\gamma}}$ is a non-empty set. In polarization level $t_r+t_{\gamma}$, any node in $\mathcal{G}_{B,t_{\gamma}}$ has at least one descendant with BP less than $\mathcal{T}$, i.e. the right-most descendant which has BP at most $B^{2^{t_r}} \leq \mathcal{T}$. Therefore, there are at least $\gamma 2^{t_{\gamma}} = \left|\mathcal{G}_{B,t_{\gamma}}\right|$ nodes at polarization level $t_r+t_{\gamma}$ that have BP less than $\mathcal{T}$, and will be relaxed. Relaxing each of these nodes is equivalent to skipping $S=(n-t_r -t_{\gamma})2^{n-t_r -t_{\gamma}}$ polarization steps. Then the total number of polarization steps skipped is
$\gamma 2^{t_{\gamma}} S$, and the proof follows.
\end{proof}

\begin{corollary}
Let $t_g= \left \lceil \log_2 \frac{\log_2 \mathcal{T}}{\log_2{p}} \right \rceil$. Then the good-channel relaxed polarization complexity reduction is lower bounded by
\eqn{\mathcal{R}_g(n)  \geq  2^{-t_g}(n-t_g)/n}
\end{corollary}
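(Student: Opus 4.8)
The plan is to obtain this corollary as a direct specialization of Theorem~\ref{Th:GCLB1}, eliminating both of its free parameters $B$ and $t_\gamma$ by the simplest admissible choice. Concretely, I would set the auxiliary threshold to $B = p$ and anchor the argument at level $t_\gamma = 0$, the root of the polarization tree. The goal is to show that with these substitutions the general lower bound $\mathcal{R}_g(n) \geq \gamma\, 2^{-t_r}(n - t_r - t_\gamma)/n$ collapses to the stated closed form depending only on $t_g$ and $n$.

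First I would verify admissibility, i.e.\ that $t_\gamma \geq t_b$. With $B = p$ we have $t_b = \lceil \log_2(\log_2 p / \log_2 p)\rceil = \lceil \log_2 1 \rceil = 0$, so $t_\gamma = 0 \geq t_b$ holds. Next I would evaluate $\gamma$ at level $t_\gamma = 0$: the tree consists of the single root channel with $Z_{1,0} = p = B$, so $\mathcal{G}_{B,0} = \{1\}$, $|\mathcal{G}_{B,0}| = 1$, and hence $\gamma = 1/2^0 = 1$. Finally, with $B = p$ the auxiliary depth $t_r = \lceil \log_2(\log_2 \mathcal{T}/\log_2 B)\rceil$ becomes exactly $\lceil \log_2(\log_2 \mathcal{T}/\log_2 p)\rceil = t_g$. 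Substituting $\gamma = 1$, $t_r = t_g$, and $t_\gamma = 0$ into the bound of Theorem~\ref{Th:GCLB1} then yields $\mathcal{R}_g(n) \geq 2^{-t_g}(n - t_g)/n$, which is the claim.

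There is essentially no hard step here: the entire content is the bookkeeping that the trivial choice $B = p$, $t_\gamma = 0$ is feasible and forces $\gamma$ to collapse to $1$. The one point worth flagging is the degenerate boundary case, $p = 1$ (a useless channel), where $\log_2 p = 0$ makes both $t_b$ and $t_g$ ill-defined; the corollary implicitly presumes $0 < p < 1$, and I would state this restriction if not already understood from the context of $\mathrm{BEC}(p)$ with $p$ bounded away from $1$.

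Interpretively, this corollary is the weakest instance of Theorem~\ref{Th:GCLB1}: it forgoes the freedom to optimize over $B$ and $t_\gamma$ in exchange for a bound expressed purely through the same $t_g$ that governs the matching upper bound of Theorem~\ref{Th:GCRPUB}. Thus the two results together sandwich the reduction as $2^{-t_g}(n - t_g)/n \leq \mathcal{R}_g(n) \leq (n - t_g)/n$, and I would remark that the gap is precisely the factor $2^{-t_g}$ lost by tracking only the single right-most descendant of the root rather than the full set of channels that eventually cross the threshold $\mathcal{T}$.
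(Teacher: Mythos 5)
Your proposal is correct and is exactly the paper's own proof: the paper states ``The corollary follows by taking $B = p$, and $t_{\gamma} = t_b = 0$ in Theorem~\ref{Th:GCLB1},'' and your substitutions ($\gamma = 1$, $t_r = t_g$) spell out precisely that specialization. The added admissibility check and the remark on the degenerate case $p=1$ are sound but not part of the paper's (terser) argument.
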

\begin{proof}
The corollary follows by taking $B = p$, and $t_{\gamma} = t_b = 0$ in \Tref{Th:GCLB1}.
\end{proof}
The following provides a tighter lower bound on the GC-RP complexity reduction, which is also illustrated in Figure \ref{fig:LB2}.

\begin{theorem}\label{Th:GCLB2}  Let $t_{b} = \left \lceil \log_2  \frac{\log_2 B - 1}{\log_2 p} \right \rceil + 1$, for an arbitrary threshold $B$. For a polarization level $t \geq t_b$, let $\gamma' = \min_{t \geq t_b} |\left\{i\ \text{odd}: i \in \mathcal{G}_{B,t}\right\}|/2^t$.  Then the good-channel relaxed polarization complexity reduction is lower bounded by
\eqn{\mathcal{R}_g(n)  \geq  \frac{\gamma' 2^{-t_r}}{2n} \left( (n-t_r)(n-t_r-2t + 1) + t(t-1) \right)}
where $t_r = \left \lceil \log_2 \frac{\log_2 \mathcal{T}}{\log_2 B} \right \rceil$.
\end{theorem}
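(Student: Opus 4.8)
The plan is to exhibit a large family of pairwise-disjoint \emph{relaxed} subtrees and sum the operations they skip. Recall that on $\mathrm{BEC}(p)$ the recursions \eqref{Z_recursion1}--\eqref{Z_recursion2} hold with equality, so the even ($+$) child squares the Bhattacharyya parameter, $z \mapsto z^2$, while the odd ($-$) child obeys $z \mapsto 2z-z^2 \le 2z$. A node at level $\ell$ whose BP is at most $\mathcal{T}$ is relaxed, and relaxing it skips the entire length-$2^{n-\ell}$ subtree below it, i.e. $A(n-\ell)=(n-\ell)2^{n-\ell}$ operations; since $t_r$ is chosen so that $B^{2^{t_r}}\le \mathcal{T}$, any node with BP at most $B$ has its all-$+$ descendant $t_r$ levels down already relaxed. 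The first step is the bookkeeping identity $(n-t_r-t)(n-t_r-t+1)=(n-t_r)(n-t_r-2t+1)+t(t-1)$, which rewrites the claimed bracket as $D(D+1)$ with $D\deff n-t-t_r$, together with $D(D+1)=2\sum_{j=0}^{D-1}(D-j)$. Thus it suffices to skip, disjointly, at least $\gamma'2^{t+j}$ relaxed subtrees rooted at level $t+t_r+j$ for each $j=0,1,\dots,D-1$, because $\sum_{j=0}^{D-1}\gamma'2^{t+j}A(D-j)=\gamma'2^{n-t_r}\sum_{j=0}^{D-1}(D-j)$, and dividing by $A(n)=n2^n$ reproduces exactly the stated bound.

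To produce these roots I would, at each level $\ell=t+j$ with $t\le \ell\le n-t_r-1$, start from the odd good nodes, of which there are at least $\gamma'2^{\ell}$ by the definition of $\gamma'$ as the minimum odd-good fraction over all levels $\ge t_b$. For each such node $G$ I follow the all-$+$ spine $t_r$ levels down to a node $R_G$ at level $t+t_r+j$ with BP at most $B^{2^{t_r}}\le \mathcal{T}$, which is therefore relaxed and contributes $A(D-j)$ skipped operations. The definition $t_b=\lceil \log_2\frac{\log_2 B-1}{\log_2 p}\rceil+1$, where $\log_2 B-1=\log_2(B/2)$, is exactly what guarantees that such odd good nodes already exist at every level $\ge t_b$: the best channel at level $t_b-1$ has BP at most $B/2$, so its odd ($-$) child has BP at most $2\cdot(B/2)=B$ by the bound $2z-z^2\le 2z$. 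This is where the at-most-doubling of the BP under the odd transform is absorbed, and it is the reason the analysis is phrased in terms of odd good channels.

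The crux, and the step I expect to be hardest, is disjointness: the roots $\{R_G\}$ must form an antichain so that their skipped-operation counts add without overlap. For two roots coming from odd good nodes $G,G'$ at levels $\ell<\ell'\le \ell+t_r$, the oddness of $G$ pins a ``$-$'' at position $\ell$ of the address of $R_G$, while positions $\ell+1,\dots,\ell+t_r$ are forced to be ``$+$''; the address of $R_{G'}$ carries a ``$-$'' at position $\ell'$, which falls inside that all-$+$ block, producing a guaranteed mismatch, so neither subtree contains the other. The remaining case $\ell'>\ell+t_r$, in which $G'$ could sit inside the already-relaxed subtree rooted at $R_G$, I would handle by passing to a maximal antichain of relaxed roots (equivalently, charging only newly-relaxed nodes whose parent still has BP exceeding $\mathcal{T}$) and verifying that the surviving count at each level remains at least $\gamma'2^{t+j}$. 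Once disjointness is secured, the summation of the first paragraph closes the argument. The two genuinely delicate points are therefore this antichain bookkeeping and the control of the $2z-z^2\le 2z$ growth, which together dictate both the restriction to odd channels and the shift by $\log_2 B-1$ in the definition of $t_b$.
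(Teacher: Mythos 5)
Your decomposition is the paper's own proof, almost line for line: the relaxed roots are the all-plus descendants, $t_r$ levels down, of the odd-indexed members of $\mathcal{G}_{B,\ell}$ at each level $\ell \ge t$; the shift $\log_2 B - 1 = \log_2 (B/2)$ in $t_b$ plays exactly the role you describe (the rightmost node at level $t_b-1$ has BP $p^{2^{t_b-1}} \le B/2$, so its minus child lands in $\mathcal{G}_{B,t_b}$, which is why odd good nodes exist at every level $\ge t_b$); each root skips $(n-\ell-t_r)2^{n-\ell-t_r}$ operations; and your closed-form identity is just the evaluation of the paper's sum of $S(t_0)$ over levels $t_0 = t,\dots,n-t_r$. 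Your prefix/mismatch argument for roots coming from levels $\ell < \ell' \le \ell + t_r$ is also the paper's parity observation (all-plus descendants of odd-indexed nodes are even-indexed at every subsequent level), stated more carefully.

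The gap is exactly where you predicted it: the case $\ell' > \ell + t_r$. Your repair---pass to a maximal antichain and ``verify that the surviving count at each level remains at least $\gamma' 2^{t+j}$''---is not only unproven, it is false in general. Once the subtree below a root $R_G$ is relaxed, every descendant of $R_G$ inherits its BP, which is at most $B^{2^{t_r}} \le \mathcal{T} \le B$; hence at any deeper level $\ell'$, half of those descendants are odd-indexed members of $\mathcal{G}_{B,\ell'}$. These interior nodes are included in the minimum that defines $\gamma'$, so the number of \emph{surviving} (non-contained) odd good nodes at level $\ell'$ can fall short of $\gamma' 2^{\ell'}$: each earlier level $\ell$ contributes roughly $\gamma' 2^{\ell'-t_r-1}$ interior odd good nodes, a deficit that grows linearly in the number of interfering levels and strictly weakens the constant in the claimed bound. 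The paper does not attempt this repair. It instead notes, in the remark immediately following the theorem, that the bound as stated requires the side condition $t + 2t_r > n$---under which both $\ell$ and $\ell'$ lie in $[t, n-t_r]$ and the case $\ell' > \ell + t_r$ simply cannot occur---and that otherwise the summation must be truncated to levels $t_0 \le t + t_r$. To complete your argument you should do the same: either assume $t + 2t_r > n$ (or truncate the sum), or explicitly quantify and absorb the deficit, which your current sketch does not do.
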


\begin{proof}
Consider the right-most node at polarization level $t - 1$ which has BP $p^{2^{t - 1}} \leq p^{2^{t_b - 1}} \leq B/2$. Therefore, the left child of this node is contained in $\mathcal{G}_{B,t}$ which means that the set of odd-indexed nodes in $\Gamma_{B,t}$ is always non-empty for $t \geq t_b$. The right-most descendant of any of these nodes, after $t_r$ more polarization levels, will have BP less than $\cT$ and will be relaxed by eliminating the polarizing subtrees emanating from them. The total reduction of polarization steps for each of these relaxed nodes is at least $S(t)=\gamma' 2^{t} 2^{n-t-t_r} (n-t-t_r)$ polarization steps. The bound follows by summation of $S(t)$ for all $t$ with $t_b \leq t \leq n - t_r $. Notice that since the right-polarized children of those odd-indexed nodes at $t$ are even indexed, they are not counted among the odd-indexed nodes in $\mathcal{G}_{B,t'}$, for any other $t'$.
\end{proof}

%\clearpage
 \begin{figure}[t!]
\centering
\includegraphics[width=0.8\linewidth]{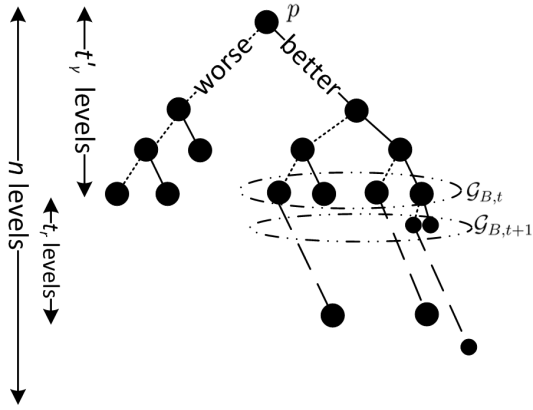}
\caption{Lower bound (LB2) on Complexity Reduction.}
\label{fig:LB2}
\end{figure}

Notice that in the above lower bound, a necessary condition is that $t + 2 t_r > n$ to guarantee that no double counting occurs. In many practical operation scenarios, this condition holds. If the condition does not hold, one can modify the bound by limiting the computed summation to $\sum_{t_0=t}^{t + t_r}  S(t_0)$.

\begin{remark}
In \Tref{Th:GCLB1}, for large enough $t_{\gamma}$, the parameter $\gamma$ is independent of $B$ and is approximately equal to $C$, the capacity of the underlying channel $W$. Also, $\gamma'$ in \Tref{Th:GCLB2}, is approximately $C/2$. For our practical applications, we can always pick a proper value of $B$ such that these approximations still hold at the desired block length.
\end{remark}

\subsection{Analysis of complexity reduction for AC-RP}

In this subsection, we analyze the complexity reduction from bad-channel relaxed polarization, as well as all-channel (both good and bad) relaxed polarization. As opposed to the previous subsection, we limit our attention in this subsection to binary erasure channels (BEC), wherein the exact computation of Bhattacharrya parameters is applicable at finite block lengths.

Throughout this subsection, we always assume the channel BEC$(p)$. For a function $F(p)$, let $F^t(p)$ denote the output from recursive application of the function $F$ $t$ times, with initial input $p$.

\begin{theorem}
\label{Th:BCUB}
The bad-channel relaxed polarization complexity reduction is upper bounded by
\eq{\mathcal{R}_b(n) \leq (n-t_b)/n,}
where $t_b = \min_t \{ F^t(p) > 1 - \mathcal{T} \}$ and $F(p)=2p-p^2$. \end{theorem}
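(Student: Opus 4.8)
The plan is to mirror the argument for the good-channel bound in \Tref{Th:GCRPUB}, replacing the best (all-plus) descendant by the worst (all-minus) descendant and the good threshold $\mathcal{T}$ by the bad threshold $1-\mathcal{T}=\mathcal{T}_b$. The entire upper bound rests on identifying, at each polarization level $t$, the single bit-channel whose Bhattacharyya parameter is largest, and on the observation that bad-channel relaxation cannot be triggered anywhere before the level at which this largest parameter first exceeds $1-\mathcal{T}$.

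First I would show that the maximum BP at level $t$ equals $F^t(p)$. For $\mathrm{BEC}(p)$ the recursions \eqref{Z_recursion1}--\eqref{Z_recursion2} hold with equality, so the two children of a node with parameter $Z$ have parameters $F(Z)=2Z-Z^2$ (the degraded, or minus, child) and $Z^2$ (the upgraded, or plus, child). Since $F(Z)-Z^2 = 2Z(1-Z)\ge 0$ on $[0,1]$, the minus child always dominates the plus child of the same parent, and since $F'(Z)=2(1-Z)\ge 0$ the map $F$ is non-decreasing on $[0,1]$. A one-line induction then gives $\max_i Z_{i,t} = F\!\left(\max_i Z_{i,t-1}\right) = F^t(p)$, the maximum being attained by the leftmost node $Z_{1,t}$, i.e. the all-minus descendant of the root. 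This is the exact dual of the claim $Z_{2^t,t}=p^{2^t}$ used in \Tref{Th:GCRPUB}.

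Next I would invoke the definition $t_b = \min_t\{F^t(p) > 1-\mathcal{T}\}$. By the previous step, for every $t < t_b$ and every index $i$ we have $Z_{i,t} \le F^t(p) \le 1-\mathcal{T} = \mathcal{T}_b$, so no node in the first $t_b$ stages satisfies the BC-RP criterion $Z_{i,t} > \mathcal{T}_b$. Hence none of the polarization operations in those first $t_b$ stages can be skipped, and they must all be executed.

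Finally I would convert this into the complexity ratio. Using $A(n)=nN$ operations distributed as $N$ operations over each of the $n$ stages, the mandatory first $t_b$ stages consume $t_b N$ operations, so at most $(n-t_b)N$ operations---those in the remaining $n-t_b$ stages---can ever be relaxed away; dividing by $A(n)=nN$ yields $\mathcal{R}_b(n)\le (n-t_b)/n$. The only real content, and the step I would be most careful about, is the monotonicity-plus-induction of the second paragraph that pins the per-level maximum to the single all-minus branch; everything after that is the same counting bookkeeping as in the good-channel upper bound.
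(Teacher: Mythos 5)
Your proof is correct and follows essentially the same route as the paper's: identify the all-minus (leftmost) descendant, whose Bhattacharyya parameter $F^t(p)$ is the largest at level $t$, conclude that no bad-channel relaxation can occur before level $t_b$, and then apply the same operation-counting as in \Tref{Th:GCRPUB}. The only difference is that you explicitly justify, via the monotonicity of $F$ and the inequality $F(Z)\ge Z^2$, that the all-minus branch attains the per-level maximum, a step the paper's proof asserts implicitly by referring to ``the worst left-polarized bit-channel.''
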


\begin{proof} The left child of a node with BP $Z$ is associated with a bit-channel with BP $F(Z)$. Hence, it requires $t_b$ polarization levels for the worst left-polarized bit-channel to have BP greater than  $1 - \mathcal{T}$. The rest of the proof follows as for the GC-RP case.
\end{proof}

Theorem \ref{Th:BCUB} can also be proved by combining the results of Theorem \ref{Th:duality} and Theorem \ref{Th:GCRPUB}. The bounds derived for the good-channel relaxed polarization in the previous subsection, can be turned into bounds for bad-channel relaxed polarization of BEC($p$) by replacing $p$ with $1-p$ in the bounds, and modifying other parameters accordingly. Hence, to avoid writing similar proofs, we only mention the theorems and skip the proofs.

Let $t_g = \left \lceil \log_2 \frac{\log_2 \mathcal{T}}{\log_2 p}\right \rceil$ and $t_b$ as in \Tref{Th:BCUB}. In fact, $t_b= \left \lceil \log_2 \frac{\log_2 \mathcal{T}}{\log_2{1-p}} \right \rceil$. Observe that if $p > 0.5$, then $t_b < t_g$, if $p < 0.5$, then $t_b > t_g$, and if $p = 0.5$, then $t_b = t_g$. Combining \Tref{Th:BCUB} with upper-bounds on GC-RP of Theorem \ref{Th:GCRPUB} results in the following upper bound on AC-RP complexity reduction.

\begin{corollary} The all-channel relaxed polarization complexity reduction is upper bounded by
\eq{\mathcal{R}_a(n) \leq (n-t')/n,}
where $t' = t_g$ for $p \leq 0.5$ and $t' = t_b$ for $p > 0.5$. \end{corollary}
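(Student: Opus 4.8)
The plan is to combine the upper bound on good-channel relaxed polarization complexity reduction (Theorem~\ref{Th:GCRPUB}) with the upper bound on bad-channel relaxed polarization complexity reduction (Theorem~\ref{Th:BCUB}), since all-channel relaxed polarization simply relaxes a node whenever \emph{either} the good-channel or the bad-channel relaxation condition is met. The governing observation is that both upper bounds have the identical form $(n-t)/n$, so the AC-RP reduction is controlled by whichever relaxation mechanism first forces a node to be relaxed, i.e.\ by the \emph{smaller} of the two stopping levels $t_g$ and $t_b$. A smaller stopping level corresponds to relaxation starting earlier, hence more skipped levels and a larger reduction, which is why the upper bound is governed by $\min\{t_g, t_b\}$.

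First I would recall the explicit closed forms recorded just before the statement, namely $t_g = \left\lceil \log_2 \frac{\log_2 \mathcal{T}}{\log_2 p}\right\rceil$ and $t_b = \left\lceil \log_2 \frac{\log_2 \mathcal{T}}{\log_2 (1-p)} \right\rceil$, together with the monotonicity comparison stated there: $t_b < t_g$ when $p > 0.5$, $t_b > t_g$ when $p < 0.5$, and $t_b = t_g$ when $p = 0.5$. The intuition is that when $p>0.5$ the channel is poor, so bad channels saturate toward BP $1$ faster than good channels saturate toward BP $0$, making the bad-channel mechanism dominant; the reverse holds for $p<0.5$. Next I would argue that under AC-RP a node is relaxed as soon as its BP drops below $\mathcal{T}$ or rises above $1-\mathcal{T}$. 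The earliest level at which \emph{some} node along the extremal (all-$+$ or all-$-$) path reaches a relaxation threshold is therefore $t' = \min\{t_g, t_b\}$, since the best-polarized path reaches the good threshold at level $t_g$ and the worst-polarized path reaches the bad threshold at level $t_b$.

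Finally I would translate the comparison of $t_g$ and $t_b$ into the claimed case split: for $p \leq 0.5$ we have $t_g \leq t_b$, so $t' = t_g$, and for $p > 0.5$ we have $t_b < t_g$, so $t' = t_b$; in either case the number of fully polarized levels before relaxation can begin is exactly $t'$, and skipping all polarization at the remaining $n-t'$ levels yields the bound $\mathcal{R}_a(n) \leq (n-t')/n$. The argument is essentially a direct specialization of the two preceding upper bounds, so I expect the main (and only real) obstacle to be stating carefully \emph{why} AC-RP is governed by the minimum stopping level rather than, say, a combination of both: one must note that relaxation at a node terminates its entire subtree, so once the extremal path first hits \emph{either} threshold at level $t'$, all deeper polarization along that branch is skipped, and the upper bound counts precisely these skipped levels. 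The boundary case $p=0.5$, where $t_g = t_b$, is consistent with either assignment and can be folded into the $p \leq 0.5$ branch without loss.
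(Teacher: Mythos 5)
Your proposal is correct and takes essentially the same route as the paper: the corollary is obtained there by combining Theorem~\ref{Th:GCRPUB} and Theorem~\ref{Th:BCUB} with the stated comparison ($t_b < t_g$ for $p>0.5$, $t_b > t_g$ for $p<0.5$, $t_b = t_g$ at $p=0.5$), so that $t' = \min\{t_g,t_b\}$ is the earliest level at which any node can hit either relaxation threshold and all operations at levels up to $t'$ must be performed. Your added remark that the all-$+$ and all-$-$ paths are extremal (so no node anywhere in the tree can be relaxed before level $t'$) is exactly the fact inherited from the proofs of those two theorems, and the handling of $p=0.5$ is consistent with the paper's statement.
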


The next theorem can be also regarded as the counterpart of \Tref{Th:GCLB1}, for bad-channel relaxed polarization.

\begin{theorem} \label{Th:BCLB1} Let $t_{c} = \left \lceil \log_2  \frac{\log_2 G}{\log_2 (1 - p)} \right \rceil$, for an arbitrary threshold $G$. For a polarization level $t_{\beta}  \geq t_c$, let $\beta = |\{i: Z_{i,t_{\beta} } \geq 1 - G \}|/2^{t_{\beta}} $. Then, the bad-channel relaxed polarization complexity reduction is lower bounded by
\eqn{\mathcal{R}_g(n)  \geq  \beta 2^{-t_l}(n-t_l-t_{\beta} )/n}
where $t_l= \left \lceil \log_2 \frac{\log_2 \mathcal{T}}{\log_2{G}} \right \rceil$. \end{theorem}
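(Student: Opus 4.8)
The plan is to mirror the proof of \Tref{Th:GCLB1}, exchanging the roles of the good (right/squaring) recursion and the bad (left/degrading) recursion, consistent with the duality of \Lref{Lem:mapping}. Recall that for $\text{BEC}(p)$ the left child of a node with BP $Z$ carries BP $F(Z) = 2Z - Z^2 = 1 - (1-Z)^2$, so the degradation of the worst channel at each level is governed by iterating $F$, exactly as the improvement of the best channel was governed by squaring in the good-channel case. The quantity to be counted is the number of nodes that eventually become \emph{bad enough} to be relaxed under BC-RP.

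First I would establish non-emptiness: the hypothesis $t_{\beta} \geq t_c$ guarantees that $\{i : Z_{i,t_{\beta}} \geq 1-G\}$ is non-empty. Indeed, the worst (left-most) channel at level $t$ has BP $Z_{1,t} = F^t(p) = 1 - (1-p)^{2^{t}}$, and the choice $t_c = \left\lceil \log_2 \frac{\log_2 G}{\log_2 (1-p)} \right\rceil$ is precisely the minimum number of levels for which $(1-p)^{2^{t}} \leq G$, i.e. for which $Z_{1,t} \geq 1-G$. Hence for any $t_{\beta} \geq t_c$ at least one node qualifies, and $\beta\, 2^{t_{\beta}} = \left|\{i : Z_{i,t_{\beta}} \geq 1-G\}\right|$ counts these nodes.

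Next I would show that each such sufficiently-bad node becomes relaxed after $t_l$ additional levels. Consider a node $V$ at level $t_{\beta}$ with BP $Z \geq 1-G$. Its left-most descendant at level $t_{\beta}+t_l$ is obtained by applying the degrading recursion $F$ exactly $t_l$ times; since $1-Z \leq G$ and $F$ squares the complement (because $1-F(Z) = (1-Z)^2$), this descendant has BP at least $1 - G^{2^{t_l}}$. By the choice $t_l = \left\lceil \log_2 \frac{\log_2 \mathcal{T}}{\log_2 G} \right\rceil$ we have $G^{2^{t_l}} \leq \mathcal{T}$, so the descendant has BP at least $1-\mathcal{T}$ and is therefore relaxed under BC-RP. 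Because distinct level-$t_{\beta}$ nodes have disjoint subtrees, their left-most descendants at level $t_{\beta}+t_l$ are distinct, so no double-counting occurs and there are at least $\beta\, 2^{t_{\beta}}$ relaxed nodes at level $t_l+t_{\beta}$.

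Finally I would tally the saved operations. Relaxing a node at level $t_l+t_{\beta}$ eliminates the entire polarizing subtree below it, which amounts to skipping $(n - t_l - t_{\beta})\,2^{\,n - t_l - t_{\beta}}$ polarization steps. Multiplying by the $\beta\, 2^{t_{\beta}}$ relaxed nodes and dividing by the total $A(n)=n2^{n}$ yields $\mathcal{R}_b(n) \geq \beta\, 2^{-t_l}(n-t_l-t_{\beta})/n$, as claimed. I expect the only delicate points to be the verification that the left-most descendant is genuinely the most degraded one, so that the BP lower bound $1 - G^{2^{t_l}}$ is legitimate, and the disjointness argument ensuring the relaxed nodes are counted without overlap; both follow from the binary-tree structure precisely as in \Tref{Th:GCLB1}, making the argument essentially dual to the good-channel bound.
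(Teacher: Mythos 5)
Your proof is correct and is essentially the argument the paper intends: the paper gives no explicit proof of this theorem, stating instead that the good-channel bounds (here \Tref{Th:GCLB1}) carry over to bad-channel relaxation via the duality of \Tref{Th:duality} and \Lref{Lem:mapping} (replace $p$ by $1-p$ and adjust parameters), which is precisely the mirrored argument you wrote out. Your use of the left-child recursion $1-F(Z)=(1-Z)^2$, the non-emptiness check giving $t_c$, the left-most descendant reaching BP at least $1-G^{2^{t_l}}\geq 1-\mathcal{T}$ after $t_l$ levels, and the count of $(n-t_l-t_{\beta})2^{\,n-t_l-t_{\beta}}$ skipped operations per relaxed node divided by $A(n)=n2^n$ all coincide, step for step, with the transposed proof of \Tref{Th:GCLB1}.
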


For AC-RP, the lower bounds of Theorem \ref{Th:GCLB1} and Theorem \ref{Th:BCLB1} can be combined as in the following corollary. It is assumed that the set of relaxed nodes in  GC-RP and the set of relaxed nodes in BC-RP do not intersect. This is a valid assumption as long as the good-channel and bad-channel relaxation thresholds, $\cT$ and $1-\cT$, are far apart enough, as characterized in subsection \ref{subsec: BP}.

\begin{corollary} \label{Th:ACLB1} The all-channel relaxed polarization is lower bounded by
\eqn{ \mathcal{R}_a(n)  \geq \frac{1}{n} \left( \beta 2^{-t_l}(n-t_l-t_{\beta})  + \gamma 2^{-t_r}(n-t_r-t_{\gamma}) \right).}
 \end{corollary}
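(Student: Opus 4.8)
The plan is to combine the two previously established lower bounds additively, exploiting the stated non-intersection hypothesis between the good-channel and bad-channel relaxed nodes. First I would recall that the complexity reduction ratio $\mathcal{R}(n)$ is defined as the ratio of skipped polarization operations to the total $A(n)=nN$. Since complexity reduction counts are additive in the number of skipped operations, the total saving under AC-RP equals the sum of the savings from GC-RP relaxation and BC-RP relaxation, \emph{provided} no polarization operation is counted twice. The assumption that the set of GC-relaxed nodes and the set of BC-relaxed nodes do not intersect guarantees precisely this: the subtrees eliminated below a good-channel relaxed node and those eliminated below a bad-channel relaxed node are disjoint, so their skipped-operation counts add without overlap.

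The key steps, in order, are as follows. First I would invoke \Tref{Th:GCLB1}, which gives that the number of polarization steps skipped by good-channel relaxation is at least $\gamma 2^{t_{\gamma}} S_g$ with $S_g = (n-t_r-t_{\gamma})2^{n-t_r-t_{\gamma}}$, yielding a reduction ratio contribution of $\gamma 2^{-t_r}(n-t_r-t_{\gamma})/n$. Second, I would invoke \Tref{Th:BCLB1}, the bad-channel counterpart obtained by the $p\leftrightarrow 1-p$ duality of \Lref{Lem:mapping}, which gives a reduction ratio contribution of $\beta 2^{-t_l}(n-t_l-t_{\beta})/n$. Third, under the disjointness assumption, the relaxed nodes from the two constructions carve out non-overlapping sets of skipped operations, so the AC-RP skipped count is at least the sum of the two individual skipped counts. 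Dividing by $A(n)=nN$ and combining the two fractions over the common denominator $n$ gives
\eqn{\mathcal{R}_a(n) \geq \frac{1}{n}\left(\beta 2^{-t_l}(n-t_l-t_{\beta}) + \gamma 2^{-t_r}(n-t_r-t_{\gamma})\right),}
which is the claimed bound.

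The main obstacle is justifying the additivity rigorously, i.e.\ confirming that the eliminated subtrees from the two relaxation rules are genuinely disjoint so that no skipped operation is double-counted. The subtle point is that a node relaxed for being good sits at level $t_r+t_{\gamma}$ with small Bhattacharyya parameter, while a node relaxed for being bad sits at a level where its parameter exceeds $1-G$; these nodes, and hence the descendant subtrees whose operations are skipped, cannot coincide as long as the thresholds $\cT$ and $1-\cT$ are separated enough, as characterized in subsection~\ref{subsec: BP}. Once disjointness is secured, the remaining work is purely the arithmetic of adding the two lower bounds already proved, so the heart of the argument is the structural non-intersection rather than any new estimate.
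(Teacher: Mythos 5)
Your proposal is correct and follows essentially the same route as the paper: the paper obtains this corollary by directly combining the lower bounds of \Tref{Th:GCLB1} and \Tref{Th:BCLB1} under the stated assumption that the GC-relaxed and BC-relaxed node sets are disjoint (guaranteed by the separation of the thresholds $\cT$ and $1-\cT$), so the skipped-operation counts add without double counting. Your write-up merely makes explicit the additivity argument that the paper leaves implicit, which is consistent with its intent.
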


\begin{theorem}\label{Th:BCLB2}  Let $t_{c} = \left \lceil \log_2  \frac{\log_2 G - 1}{\log_2 (1 - p)} \right \rceil + 1$, for an arbitrary threshold $G$. For a polarization level $t \geq t_c$, let $\beta' = \min_{t \geq t_c} |\left\{i\ \text{odd}:  Z_{i,t} \geq 1 - G \right\}|/2^t$.  Then, the bad-channel relaxed polarization complexity reduction is lower bounded by
\eqn{\mathcal{R}_b(n)  \geq  \frac{\beta' 2^{-t_l}}{2n} \left( (n-t_l)(n-t_l-2t + 1) + t(t-1) \right)}
where $t_l = \left \lceil \log_2 \frac{\log_2 \mathcal{T}}{\log_2 G} \right \rceil$.
\end{theorem}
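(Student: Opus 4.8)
The statement is the bad-channel mirror of \Tref{Th:GCLB2}, so the plan is to obtain it from the good-channel bound through the erasure-channel duality rather than redoing the tree combinatorics from scratch. Concretely, by \Tref{Th:duality} and \Lref{Lem:mapping}, bad-channel relaxed polarization of $\text{BEC}(p)$ with threshold $1-\mathcal{T}$ yields exactly the same complexity reduction as good-channel relaxed polarization of $\text{BEC}(1-p)$ with threshold $\mathcal{T}$. I would therefore apply \Tref{Th:GCLB2} to the channel $\text{BEC}(1-p)$ and translate every parameter back through the isomorphism: replacing $p$ by $1-p$ turns $t_b$ into $t_c=\lceil \log_2\frac{\log_2 G-1}{\log_2(1-p)}\rceil+1$, leaves the relaxation depth $t_l=\lceil\log_2\frac{\log_2\mathcal{T}}{\log_2 G}\rceil$ in the same form (it is governed by $G\to\mathcal{T}$, just as $B\to\mathcal{T}$ governs $t_r$), and carries the closed-form bound over verbatim with $\gamma'$ replaced by $\beta'$.

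For the quantitative core I would reuse the good-channel bookkeeping unchanged. A barely-bad node at level $t$ (Bhattacharyya parameter at least $1-G$) is driven into the bad-relaxation region along the all-minus spine: on a BEC the minus transform acts as $1-Z\mapsto(1-Z)^2$, so after $t_l$ minus steps its parameter is at least $1-G^{2^{t_l}}\geq 1-\mathcal{T}$ and the node is relaxed. Each such relaxation at level $t+t_l$ skips $(n-t-t_l)2^{n-t-t_l}$ operations; multiplying by the $\beta'2^{t}$ counted nodes at level $t$ and summing over $t_c\le t\le n-t_l$ gives, after the same algebraic simplification as in \Tref{Th:GCLB2} (using the identity $(n-t_l)(n-t_l-2t+1)+t(t-1)=(n-t_l-t)(n-t_l-t+1)$), the claimed bound, where the free $t$ is simply the lowest level from which the summation is started.

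The one place that genuinely needs care, and the step I expect to be the real obstacle, is the parity bookkeeping induced by the $+/-$ reversal in \Lref{Lem:mapping}. Because the isomorphism interchanges minus- and plus-children, the odd-indexed (minus-child) good nodes counted in \Tref{Th:GCLB2} for $\text{BEC}(1-p)$ correspond under the reversal to the oppositely-parted (plus-child) nodes of $\text{BEC}(p)$, and keeping this correspondence straight is precisely what guarantees that the all-minus subtrees relaxed at distinct levels remain disjoint, mirroring the closing remark of the good-channel proof that the plus-children of the counted nodes are not recounted at any other level. For non-emptiness of the counted set I would argue directly on $\text{BEC}(p)$: the worst (all-minus) node at level $t-1$ has parameter $1-(1-p)^{2^{t-1}}\geq 1-G/2$ for $t\geq t_c$, so its plus-child at level $t$ has parameter at least $(1-G/2)^2\geq 1-G$, exhibiting a counted bad node at every level $t\geq t_c$. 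Verifying that these counted nodes and their relaxed left-spines never overlap across levels is the crux; once it is in place, the summation and the translation of the constants are routine.
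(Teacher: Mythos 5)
Your proposal follows the paper's own route: the paper gives no standalone proof of \Tref{Th:BCLB2}, saying only that the good-channel bounds can be turned into bad-channel bounds for BEC($p$) by replacing $p$ with $1-p$ and adjusting parameters --- i.e., exactly your transfer of \Tref{Th:GCLB2} through \Tref{Th:duality} and \Lref{Lem:mapping}. Your parameter translations ($t_b \to t_c$ and $t_r \to t_l$ under $B \to G$, $p \to 1-p$), the spine argument ($1-Z \mapsto (1-Z)^2$ under the minus transform, so $t_l$ minus steps push a node with $Z \geq 1-G$ past $1-\cT$), and the closed-form summation identity are all correct.

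The one point you should state explicitly, because your own bookkeeping uncovers it: the isomorphism of \Lref{Lem:mapping} reverses $+$ and $-$, so the odd-indexed (minus-child) good nodes counted by $\gamma'$ in \Tref{Th:GCLB2} on BEC($1-p$) map to \emph{even}-indexed (plus-child) bad nodes of BEC($p$) --- precisely the nodes your non-emptiness argument exhibits (the plus-child of the all-minus node). The theorem's $\beta'$, however, is defined over \emph{odd}-indexed bad nodes, and with that literal reading the disjointness argument genuinely fails: the minus-child of an odd bad node has BP $2Z - Z^2 \geq Z \geq 1-G$ and is again odd-indexed, so it is counted at the next level and its all-minus relaxation subtree is nested inside its parent's, which is exactly the double counting the parity restriction is designed to prevent. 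So what you prove (and what the duality delivers) is the statement with ``$i$ even''; the ``odd'' in the theorem, like the parity-preserving mirror formula $i \mapsto 2^t - i$ in \Lref{Lem:mapping} (which should be $i \mapsto 2^t + 1 - i$ in the paper's $1$-based indexing), is an off-by-one slip rather than a flaw in your argument. Modulo reading $\beta'$ with the corrected parity, and the side condition $t + 2t_l > n$ against subtree containment (which the paper likewise relegates to a remark following the theorem), your proof is complete.
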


Similar to \Tref{Th:GCLB2}, the above theorem holds under the condition that $t+ 2 t_l > n$. Also, similar to Corollary \ref{Th:ACLB2}, the following corollary holds by combining Theorem \ref{Th:GCLB2} and Theorem \ref{Th:BCLB2}. To make notations consistent, let $t'_{\gamma}$ denote the level $t$ in \Tref{Th:GCLB2} and $t'_{\beta}$ denote the level $t$ in \Tref{Th:BCLB2}.

\begin{corollary} \label{Th:ACLB2} The all-channel relaxed polarization is lower bounded by
\eqn{\begin{split} \mathcal{R}_a(n)   \geq  & \frac{\beta' 2^{-t_l}}{2n} \left( (n-t_l)(n-t_l-2t'_{\beta} + 1) + {t'_{\beta}}^2-t'_{\beta} \right) \\ & +
\frac{\gamma' 2^{-t_r}}{2n} \left( (n-t_r)(n-t_r-2t'_{\gamma} + 1) + {t'_{\gamma}}^2-t'_{\gamma} \right).\end{split}
}
 \end{corollary}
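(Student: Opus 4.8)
The plan is to recognize this corollary as the additive combination of the two lower bounds established in \Tref{Th:GCLB2} and \Tref{Th:BCLB2}, justified by the fact that good-channel and bad-channel relaxation skip disjoint collections of polarization operations. Since $\mathcal{R}_a(n)$ is by definition the ratio of skipped operations to the total $nN$, it suffices to show that the set of operations skipped under AC-RP contains, as disjoint subsets, those counted in the GC-RP bound and those counted in the BC-RP bound; additivity of the two ratios then yields the claimed inequality.

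First I would recall that under AC-RP a node is relaxed whenever its Bhattacharyya parameter satisfies $Z_{i,t} < \mathcal{T}$ (good) or $Z_{i,t} > 1 - \mathcal{T}$ (bad), and that relaxing a node eliminates the entire polarizing subtree emanating from it. I would then partition the relaxed nodes into the good-relaxed set and the bad-relaxed set. Invoking the separation hypothesis stated before \Cref{Th:ACLB1}, namely that $\mathcal{T}$ and $1 - \mathcal{T}$ are chosen far enough apart as characterized in Section~\ref{subsec: BP}, these two sets are disjoint: no single node can be simultaneously below $\mathcal{T}$ and above $1-\mathcal{T}$.

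The key step is to argue that the skipped-operation subtrees themselves, not merely their root nodes, are disjoint. Here I would use the structural fact that once a node is relaxed all its descendants inherit its Bhattacharyya parameter; hence every descendant of a good-relaxed node is again good and every descendant of a bad-relaxed node is again bad. Consequently a good-relaxed node and a bad-relaxed node can be neither identical nor ancestor and descendant of one another, so the two families of eliminated subtrees occupy disjoint regions of the polarization tree. This cross-type disjointness is the main obstacle to verify, since the counting inside \Tref{Th:GCLB2} and \Tref{Th:BCLB2} already handles double-counting among odd-indexed nodes across levels through the observation that right-polarized children are even-indexed; the only new ingredient needed is precisely the disjointness just described.

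With disjointness in hand, the number of operations skipped under AC-RP is at least the sum of the number skipped by GC-RP alone and the number skipped by BC-RP alone. Applying \Tref{Th:GCLB2} with its free level parameter set to $t = t'_{\gamma}$ gives the good-channel contribution $\frac{\gamma' 2^{-t_r}}{2n}\bigl((n-t_r)(n-t_r-2t'_{\gamma}+1) + {t'_{\gamma}}^2 - t'_{\gamma}\bigr)$, using $t'_{\gamma}(t'_{\gamma}-1) = {t'_{\gamma}}^2 - t'_{\gamma}$; applying \Tref{Th:BCLB2} with $t = t'_{\beta}$ gives the analogous bad-channel term with $t_l$, $\beta'$, and $t'_{\beta}$. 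Summing these two lower bounds produces exactly the right-hand side of the corollary, completing the argument.
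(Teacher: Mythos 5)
Your proposal matches the paper's own argument: the paper derives this corollary exactly by summing the bounds of \Tref{Th:GCLB2} (with $t = t'_{\gamma}$) and \Tref{Th:BCLB2} (with $t = t'_{\beta}$), relying on the same assumption that the GC-relaxed and BC-relaxed node sets are disjoint when $\cT$ and $1-\cT$ are sufficiently separated. Your explicit verification that the eliminated subtrees (not just the roots) are disjoint is a sound elaboration of what the paper leaves implicit, but it is the same approach.
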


%=======================================================================%
%                                                                       %
%     5. Numerical Evaluation of Relaxed Polarization on Erasure Channels                       %
%                                                                       %
%=======================================================================%

\subsection{Numerical evaluation of  complexity reduction by relaxed polarization on erasure channels}
\label{sec:sec5}

In this subsection, we compute the complexity reduction of different scenarios of relaxed polarization over binary erasure channels and compare them with the bounds provided in this section.

The block length of the constructed relaxed polar code is assumed to be $N = 2^{20}$ and the FER of $E = 10^{-5}$ is assumed for the code construction. The erasure probability $p$ will be varying between $0$ and $1$. We have observed that the thresholds $B = 2p - p^2$ and $1 - G = p^2$ will result in desired values for $\gamma$ in \Tref{Th:GCLB1} and $\beta$ in \Tref{Th:BCLB1}. With these values of $B$ and $G$ we have $\gamma = 1 - p$ and $\beta = p$.  We have also observed that $\gamma'$ in \Tref{Th:GCLB2} and $\beta'$ in \Tref{Th:BCLB2} can be well approximated by $\gamma/2$ and $\beta/2$. The results of  \figref{fig:g1} show that actual CR of GC-RP can be characterized using the derived upper and lower bounds, and up to $70\%$ complexity reduction is achievable at a target FER of $10^{-5}$.  

  The performance of AC-RP is analyzed in \figref{fig:A1} at the same target FER of $10^{-5}$, where the analytical bounds are compared to the numerical results from actual construction. It is observed that the bounds give a good approximation of the actual complexity reduction.  GC-RP is effective with good channel parameters and BC-RP is more effective with bad channel parameters. The bounds are also minimized at $p=0.5$, and symmetric around $p=0.5$. This can be explained by the symmetry property of Theorem \ref{Th:duality}.
	
 \begin{figure}[t!]
\centering
\includegraphics[width=\linewidth]{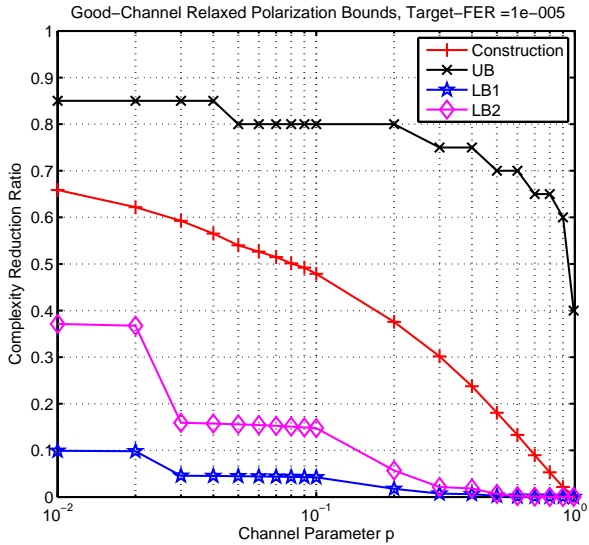} 
\caption{Bounds on the good-channel relaxed polarization complexity reduction with target FER of $10^{-5}$ .}
\label{fig:g1}
\end{figure}

%=======================================================================%
%                                                                       %
%     6. Relaxed Polarization on General BMS Channels                                                                    %
%                                                                       %
%=======================================================================%

\section{Relaxed Polarization on General BMS Channels}
\label{sec:sec6}

In this section, we describe how a code can be constructed and decoded on general binary memoryless channels using relaxed polarization.

\subsection{Construction of relaxed polar codes on general BMS channels}
For general binary memoryless channels, the Bhattacharyaa parameters are exponentially hard to compute as block length increases. This is due to the exponential output alphabet size of the polarized bit-channels. Instead of exact calculation of Bhattacharyya parameters, they can be well approximated by bounding the output alphabet size of bit-channels via channel degrading and channel upgrading transformations \cite{TalVardy13}.  The channel degrading and upgrading operations provide tight lower bounds and upper bounds on the corresponding Bhattacharyya parameters. In order to construct polar codes for continuous-output  BMS channels (e.g.  additive white Gaussian noise (AWGN) channels), the channel can be first quantized. Then, the degrading and upgrading operations will be performed for the bit-channels resulting from polarization of the quantized channel \cite{TalVardy13}.
For AWGN channels, the bit channel error probability (BC-EP) can also be reasonably approximated using density evolution and
a Gaussian approximation \cite{Trifonov}. Alternatively, for short codes, the BC-EP can be numerically evaluated using
Monte-Carlo simulations, assuming a genie-aided SC decoder. For generality of description, let the error probability of the $j$-th bit-channel at the $t$-th polarization level be bounded by
\eq{ \underline{E}_{t,j} \leq E_{t,j} \leq \overline{E}_{t,j}}
where $\underline{E}_{t,j}$ is the probability of error of the upgraded version of $W^{(j)}_{2^t}$ and $\overline{E}_{t,j}$ is the probability of error of its degraded version.

The values of $\underline{E}_{t,j}$ and $\overline{E}_{t,j}$ can be computed using upgraded and degraded versions of polarization tree. In the upgraded polarization tree, after each level of polarization the resulting bit-channels will be upgraded to have a limited output alphabet size. At the next level, the upgraded bit-channels will be polarized. As a result, all the bit-channels in the upgraded polarization tree will have a limited output alphabet size. Therefore,  $\underline{E}_{t,j}$ can be easily computed. The same procedure is repeated to get a degraded polarization tree and to compute $\overline{E}_{t,j}$. The construction of fully polarized codes can be done according to either lower bounds or upper bounds on the probability of error of the bit-channels at the last level of polarization. For instance, in case of using upper bounds, bit-channel are sorted according to their error probabilities $\overline{E}_{n,j}$ in ascending order. Accumulate as many good bit-channels in the set $\Gamma$, such that $\sum_{j\in \Gamma}\overline{E}_{n,j} \leq E$, where $E$ is the target FER. Then, the FP code is defined by $\Gamma$ and has rate $R=|\Gamma|/N$.

In proposed good-channel relaxed polar codes, a node will not be further polarized if the upper bound on its bit-channel error probability is lower than a certain threshold $\mathcal{E}_g$. For bad-channel relaxed polar codes, a bit-channel will not be further polarized if the lower bound on its error probability exceeds an upper threshold $\mathcal{E}_b$. Numerically, it was found for BMS channels that the error performance of the constructed code is closer to that of the upper-bound calculated using the degraded channel. Hence, when polarization is relaxed for a node, the error probability of the children of a non-polarized node is set to the degraded-channel error probability of the parent.  As a result, the procedure for designing relaxed polar codes of length $N$ at a target FER $E$ on general BMS channels is specified below. Each node $j$ at level $t$ in the polarization tree is associated with a label Relaxed($t,j$) which is initialized to 0, and will be set to 1 only if this node will not be polarized. The error probability (EP) of each node in the RP tree is initialized to that of the fully polarized tree $\overline{E}^R_{t,j} = \overline{E}_{t,j}$. The relaxed polar code will be defined by its good channel set $\Gamma_{R}$.

\begin{spacing}{1}
\begin{algorithm}[H]
\caption{Relaxed Construction for General BMS Channels}
\begin{algorithmic}[1]
    \STATE \textbf{Stage 1:} Calculate AC-RP bit-channel EP for target FER $E$ and rate $R$
    \bindent
    \STATE Set the GC relaxation threshold as $\mathcal{E}_g = E/(R l)$
    \STATE Set the BC relaxation threshold as $\mathcal{E}_b = H^{-1}\left( 1 - H(\mathcal{E}_g) \right)$
    		    \FOR{$t=1:n$}
    \FOR {$j=1:2^t$}
		    \IF{Relaxed$(t-1,\lceil j/2 \rceil)=1$}
    \STATE Relaxed$(t,j) =1$, $\overline{E}^R_{t,j} = \overline{E}^R_{t,\lceil j/2 \rceil}$
    \ELSIF{$\left\{ \overline{E}_{t,j} < \mathcal{E}_g \right\}$ \OR $\left\{ \underline{E}_{t,j} > \mathcal{E}_b \right\}$}
    \STATE Relaxed$(t,j) =1$
    \ENDIF
				\ENDFOR
    \ENDFOR
         \eindent
      \STATE \textbf{Stage 2:} Construct AC-RP code with rate $R$
        \bindent
    \STATE Sort bit-channel EPs $\overline{E}^R_{t,j}$ in ascending order
    \STATE Select $\Gamma_{R}$ to have the $R N$ bit channels with the smallest EP
      \eindent
\end{algorithmic}
\end{algorithm}
\end{spacing}

  \begin{figure}[t!]
\centering
\includegraphics[width=\linewidth]{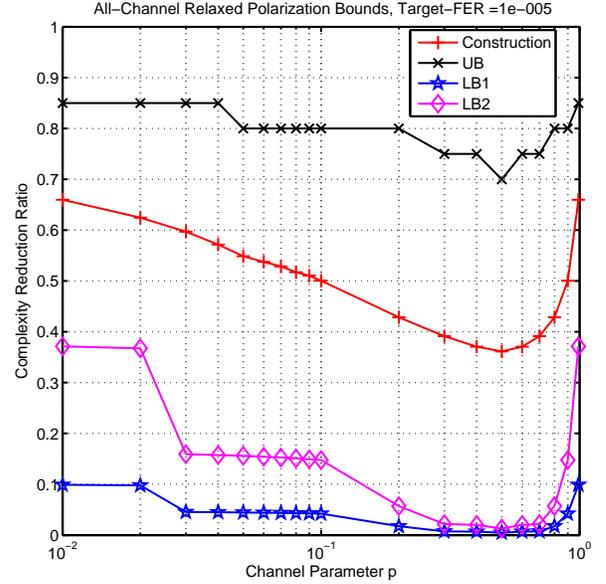} 
\caption{Bounds on the all-channel relaxed polarization complexity reduction with target FER of $10^{-5}$ .}
\label{fig:A1}
%\vspace{-0.5cm}
\end{figure}

For the case of AWGN channel, first the channel parameter is calculated to satisfy the condition $C \geq R$, where $R$ is the target rate and $C$ is the capacity of the channel. Then, the channel is quantized using the method of \cite{TalVardy13} to get a channel with discrete output alphabet. Then Algorithm 1, discussed above, will be applied to this channel.

With target FER $E$, the good-channel relaxation threshold is chosen to be $\mathcal{E}_g = E/(R N)$ to satisfy the target FER, i.e. $\sum_{j \in \Gamma_R} \overline{E}_{n,j} \leq E$. Let $H(E_W)$ be the entropy of the channel $W$ with fidelty $E_W$, such that $H(E_W)=1- I(W)$, where $I(W)$ is capacity of channel $W$.
Then, the bad-channel relaxation threshold is chosen such that $H(\mathcal{E}_b) =  1 - H(\mathcal{E}_g) $.
For general BMS channels $W$ with error probability $E_W$,  the approximation $H(W) \approx h_2(E_W)$ can be used, based on the inequality $H(W) \leq h_2(E_W)$ \cite{arikan2009channel, richardson2008modern}, where $h_2(p)=-p \log_2(p) - (1-p) \log_2(1 -  p)$ is the binary entropy function.
To guarantee that there is no rate loss from bad-channel relaxation if $\left\{ \underline{E}_{t,j} > \mathcal{E}_b \right\}$ is satisfied at Step 13, then relaxation may only be done after verifying that the lower bound on the error probability of the best upgraded descendent channel of that node is still higher than $\mathcal{E}_g$, which will be satisfied for practical frame errors $E$.

The same procedure above can be used to construct GC-RP codes, by
neglecting the bad-channel relaxation condition $\left\{ \underline{E}_{t,j} > \mathcal{E}_b \right\}$ in step 13.

In case of erasure channels with erasure probability $p$, the channel parameter (erasure probability) for the target channel capacity is  $p = 1 - C$. The upper and lower bound on the bit-channel error probabilities coincide, and can be calculated exactly by the BPs, $E_{t,j} = {Z}_{t,j}/2$. To calculate the relaxation thresholds, the error probability is $E=p/2$, and the entropy is $H(E)=2E$.

Algorithm 1 constructs the relaxed polar codes for general BMS channels, using bounds on the bit-channel error probability. However, for short block lengths, the bit-channel error probability can be numerically calculated to be $\widetilde{E}_{t,j}$  using Monte-Carlo simulations, assuming a genie-aided successive cancellation decoder. In such a case, Algorithm 1 is modified by letting $ \underline{E}_{t,j}  = \overline{E}_{t,j} = \widetilde{E}_{t,j}.$

 \subsection{Decoding of relaxed polar codes on general BMS channels}
 \label{sec:dec}

Decoding of relaxed polar codes can be done by a modified successive cancellation decoder. For a polar code of length $N$ and BMS channel $W$, suppose that $u^N_1$ is the input vector and $y^N_1$ is the received word.

 Consider a relaxed polar code constructed as explained in the previous subsection. At each level $t=\log N'$, for  $1 \leq i \leq N$, Relaxed$(t,i)=1$ means that $\widetilde{W}_{N'}^{(i)}$ is not polarized and Relaxed$(t,i)=0$ means that $\widetilde{W}_{N'}^{(i)}$ is fully polarized. In practical application of relaxed polar codes, the decoder will have prior knowledge of the polarization map by Relaxed$(t,i)$, which requires at most storage of $2N$ bits.  For communication systems, the polarization map can be specified by the communication standard,  similar to the specification of the parity-check matrices of block codes.

 For $i=1,2,\ldots, N$, the decoder computes the likelihood ratio (LR) $L^{(i)}_N$ of $u_i$, given the channel outputs $y^N_1$ and previously decoded bits $\hat{u}^{i-1}_1$
$$
L^{(i)}_N(y^N_1,\hat{u}^{i-1}_1) = \frac{\widetilde{W}^{(i)}_N (y^{N}_1,\hat{u}^{i-1}_1 | u_i = 0) }{\widetilde{W}^{(i)}_N (y^{N}_1,\hat{u}^{i-1}_1 | u_i = 1)}.$$

For FP polar codes, Ar{\i}kan observed that calculation of the LRs at length $N$ require another $N$ calculations at the parent node at length $(N/2)$, where the LRs from the pair
$\left(L^{(2i-1)}_N (y^N_1,\hatu^{2i-2}_1), L^{(2i)}_N(y^N_1,\hatu^{2i-1}_1) \right)$ are assembled from
the pair $\left( L^{(i)}_{N/2}(y^{N/2}_1,\hatu^{2i-2}_{1,e} \oplus \hatu^{2i-2}_{1,o}), L^{(i)}_{N/2}(y^N_{N/2+1},\hatu^{2i-2}_{1,e}) \right)$, via a straightforward calculation using the bit-channel recursion formulas for $n \geq 1$ \cite{arikan2009channel}.
The relaxed successive cancellation decoder (RSCD) follows the same recursion.
Hence, if Relaxed$(t,i)=0$, the likelihood ratio (LR) $L^{(i)}_N$ can be computed recursively as follows.
  \begin{align}
\label{LLR_odd}
&L^{(2i-1)}_N (y^N_1,\hatu^{2i-2}_1)\\
\notag
&= \frac{1+L^{(i)}_{N/2}(y^{N/2}_1,\hatu^{2i-2}_{1,e} \oplus \hatu^{2i-2}_{1,o})
L^{(i)}_{N/2}(y^{N}_{N/2+1},\hatu^{2i-2}_{1,e})}{L^{(i)}_{N/2}(y^{N/2}_1,\hatu^{2i-2}_{1,e} \oplus \hatu^{2i-2}_{1,o})
+L^{(i)}_{N/2}(y^{N}_{N/2+1},\hatu^{2i-2}_{1,e})},\\
\label{LLR_even}
&L^{(2i)}_N(y^N_1,\hatu^{2i-1}_1)\\
\notag
 &= \left[ L^{(i)}_{N/2}(y^{N/2}_1,\hatu^{2i-2}_{1,e} \oplus \hatu^{2i-2}_{1,o}) \right]^{1-2\hatu_{2i-1}}
L^{(i)}_{N/2}(y^N_{N/2+1},\hatu^{2i-2}_{1,e}).
\end{align}
Otherwise, if Relaxed$(t,i)=1$ the decoding equations are modified as follows:
 \begin{align}
\label{LLR_odd_new}
L^{(2i-1)}_N (y^N_1,\hatu^{2i-2}_1) &= L^{(i)}_{N/2}(y^{N/2}_1,\hatu^{2i-2}_{1,o})\\
\label{LLR_even_new}
L^{(2i)}_N(y^N_1,\hatu^{2i-1}_1) &= L^{(i)}_{N/2}(y^N_{N/2+1},\hatu^{2i-2}_{1,e}).
\end{align}
The hard-decision estimates at a parent node are calculated from the hard-decision estimates of its two children in a step similar to encoding.
At the last stage when $N=1$, the LRs are simply $L_1^{(1)}(y_i) = W(y_i|0)/W(y_i|1)$. At the end, hard decisions are made on $L_N^{(i)}$ (at the leaf nodes), except for frozen bit-channels $W^{(i)}_N$ where $\hat{u_i} = u_i = 0$.

 From the above description, $2N$ LR calculations for $\log N$ levels are required for decoding conventional FP polar codes. However, the decoding complexity of relaxed polar codes is linearly reduced by the ratio of relaxed nodes in the polarization tree, since no LR calculation is required at relaxed nodes.

 The relaxed successive cancellation decoding as discussed above is extended to perform the relaxed successive cancellation list (SCL) decoding of RP codes. The SCL decoding of polar codes is shown to have considerable improvement over the regular SC decoding \cite{TVlist, NiuCRC, NiuSurvey}. In SCL decoder, instead of only one path, i.e. a sequence of decoded information bits, up to $L$ decoding paths are considered at each decoding stage. The decoding paths are being updated as the decoder evolves. At each stage of the decoding process, where an information bit $u_i$ is being decoded, both options of $u_i = 0$ and $u_i=1$ are being considered and hence, the number of decoding paths is doubled to at most $2L$. Then this extended list of size up to $2L$ is pruned, based on a maximum likelihood metric, to get a list of size $L$ of the locally most likely paths. In the SCL decoding, there are up to $L$ likelihood ratios of $L_N^{(i)}$ at each node in the decoding trellis and then up to $L$ parallel recursive calculations, as in \eqref{LLR_odd} and \eqref{LLR_even}, are performed at the node. In the relaxed SCL decoding, if a node is relaxed, then there will be $L$ parallel decoding equations as in \eqref{LLR_odd_new} and \eqref{LLR_even_new}. The operations for picking the most likely paths remain the same for relaxed SCL.

\subsection{Relaxed polarization versus simplified successive cancellation decoding}
\label{sec:sec5b}
Whereas relaxed polarization results in a construction of a code different from that obtained by Ar{\i}kan's full polarization, the SSCD \cite{alamdar2011simplified} is a simplified decoder for a specific code. In fact, as would be clarified below, the SSCD can also be used to further reduce the complexity and latency of decoding relaxed polar codes.

By construction, relaxed polarization attempts to maximize the number of rate-one nodes and rate-zero nodes by relaxing the polarization of sufficiently good and sufficiently bad bit-channels, respectively. Rate-1 and rate-0 nodes are nodes which have all their descendants in the good channel set and  the bad-channel set, respectively.
As was clarified in Section \ref{sec:dec}, no encoding or decoding operations are done at the relaxed nodes.

The SSCD identifies the rate-1 and rate-0 nodes in the received code, and reduces the operations required to decode the corresponding constituent codes. Hence, SSCD does not offer complexity reductions at the encoder. Since a rate-0  node only has frozen bits at its output, its constituent tree does not need to be traversed when decoding since the leaf values are known a priori. The output bits of the tree rooted at a rate-1 node can be found by simple hard-decisions on the soft likelihood ratios at the rate-1 node. However, since these bit-channels were polarized at the encoder, the input bits at the rate-1 node need to be recovered with a step similar to re-encoding in order to recover the information bits.

 Hence, relaxed polarization offers complexity and latency reductions at both the encoder and the decoder, while SSCD only reduces the decoding complexity and latency compared to Ar{\i}kan's successive cancellation decoder. The decoding complexity reduction is calculated for the SSCD as for the relaxed code, where rate-1 nodes and rate-0 nodes contribute to the complexity reduction same as relaxed nodes, and the re-encoding complexity at the rate-1 nodes is neglected.

 Next, we compare the  reductions in decoding latency. As described in the previous section, a polarized node requires three clock cycles to calculate the even and odd LRs, and then calculate its hard-decision estimate from the hard-decision estimates of its children pair using the encoding operation. Hence, the total decoding latency with successive cancellation decoding for a polar code of length $N=2^n$ can be assumed to be $\mathcal{L}(n) = 3\sum_{i=0}^{n-1} 2^i = 3N -3$. Consider the RP code decoded with the RSCD. A BC relaxed node requires no operation and hence contributes nothing to the latency. A sub-tree of GC relaxed nodes requires only one clock cycle at  its root to calculate its hard-decision estimates. Similarly, for the SSCD, rate-0 nodes contribute nothing to the latency, and a sub-tree of rate-1 nodes requires one clock cycle. However, since this rate-1 constituent code is fully polarized, if the root of the rate-1 constituent code is at level $n-t$, then additional $t$ clock cycles are required for re-encoding to recover the information bits at the leaf nodes.

 To this end, there are two important observations to make.

 Firstly, the SSCD can be combined with RSCD to decode RP codes. The SSCD as proposed in  \cite{alamdar2011simplified} is applied on Ar{\i}kan's fully polarized code, and will be noted as SSCD FP. Since the relaxed polar construction as described above relaxes the nodes before determining the good-channel set, then there can exist rate-1 and rate-0 nodes in the resultant code which have not been relaxed. Then, this implies that SSCD can also be used to decode relaxed polar codes, where after determining the good-channel set, the rate-1 and rate-0 nodes are identified and the operations at rate-1 nodes and rate-0 nodes which have not be relaxed by construction will be simplified as in SSCD. Hence, combined SSCD and RSCD on RP codes, denoted by SSCD RP,  will further reduce the decoding complexity of RP codes. Moreover, since RP codes are constructed to have more rate-1 and rate-0 nodes by the relaxation operation,  SSCD RP will often have reduced decoding complexity compared to SSCD FP.

 Secondly, the relaxed polar code construction can be modified such that all rate-1 and rate-0 nodes of the fully polarized code are relaxed at the encoder. This modified relaxed polarization (MRP) construction is done by first constructing Ar{\i}kan's fully polarized code, selecting the good-channel set according to the desired rate or target error probability, finding the modified relaxed polarization map as that which relaxes all the rate-1 and rate-0 nodes in the FP code, and then encoding according to the modified relaxed polarization map. The good channel set for the MRP code will be fixed as that in the FP code. If only the rate-1 nodes are relaxed, then the code is called GC-MRP. If both the rate-1 and rate-0 nodes are relaxed, the code is called AC-MRP. Since all rate-1 and rate-0 nodes are already relaxed, the combined SSCD-RSCD will not produce additional complexity or latency reductions compared to RSCD when decoding the MRP code.  Furthermore, neglecting the re-encoding complexity required by the SSCD at rate-1 nodes, RSCD on the modified RP codes will have the same decoding complexity but lower decoding latency compared to that of SSCD on FP codes. MRP codes also have the additional advantage of having lower error rates than their corresponding FP codes, as shown in Lemma \ref{lem:error}.

The complexity reductions by RSCD RP, SSCD FP, SSCD RP are compared in  \figref{fig:CRSSCD} for two cases: GC-RP versus SSCD when applied to rate-1 nodes only, denoted by SSCD(1), and AC-RP versus SSCD when applied to both rate-1 and rate-0 nodes, denoted by SSCD. The complexity reductions are shown for a code of length $2^{16}$ on the binary erasure channel. The FP codes are constructed at a rate equals to $9/10$ of the channel capacity. To construct the RP codes, the error probability $E$ of the good-channel set of the FP code is calculated and used to calculate the relaxation thresholds by $\mathcal{T}_g = 2E/| \Gamma |$, and $\mathcal{T}_b = 1-\mathcal{T}_g$. The asymptotic bound of CR with GC-RP is the capacity as proved in Theorem \ref{asymp-bound1}, and is also shown. It can be noted that the results of the figure are inline with the discussion above. When considering the GC-RP and rate-1 nodes only, RSCD GC-RP can offer higher complexity reduction than SSCD(1), especially at higher rates. However, after taking the bad-channel nodes and the rate-0 nodes into account, SSCD FP has higher complexity reduction than RSCD RP  which implies that the bad channel relaxation threshold can be made more aggressive without degrading the performance. Across all rates, the combined RSCD-SSCD on RP codes has the highest complexity reduction.
RSCD on the MRP codes is shown in \figref{fig:LRSSCD} to have the least latency compared to SSCD on FP codes, and  RSCD on RP codes. It is observed that in case of GC-RP, the decoding latency decreases at higher code rates due to the increase in the number of relaxed nodes. For AC-RP, the latency increases with the code rate since GC-relaxed (or rate-1) nodes require more latency than BC-relaxed (or rate-0) nodes, as described above. It has also been observed that the percentage of latency reduction increases with the code length.

\begin{figure}[th!]
\centering
\includegraphics[width=\linewidth]{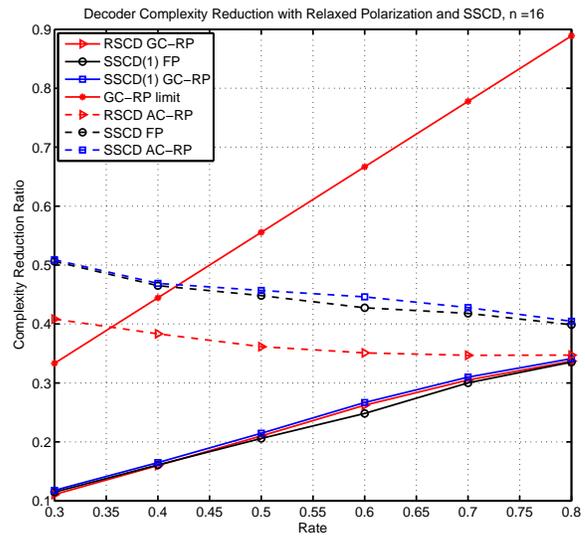} 
\caption{Complexity reduction ratios of RP codes and SSCD at different code rates and code length $2^{16}$ on the BEC channel.}
\label{fig:CRSSCD}
%\vspace{-0.5cm}
\end{figure}

\begin{figure}[th!]
\centering
\includegraphics[width=\linewidth]{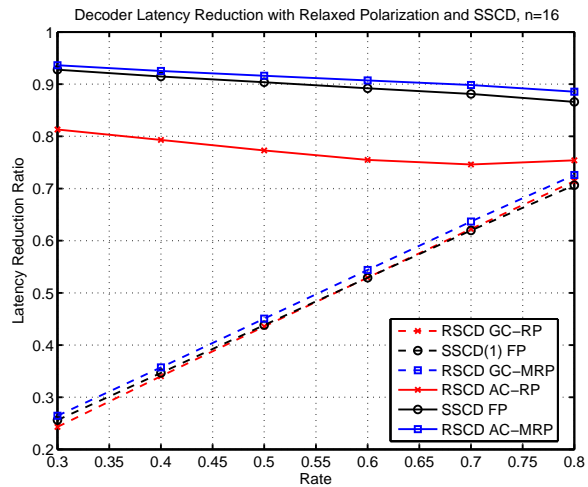} 
\caption{Latency reduction ratios of RP codes and SSCD at different code rates and code length $2^{16}$ on the BEC channel.}
\label{fig:LRSSCD}
%\vspace{-0.5cm}
\end{figure}

 \subsection{Performance on the  AWGN channel}\label{sec:AWGN}

 The achievable complexity reduction is analyzed by actual construction of the relaxed polar codes on AWGN channels in \figref{fig:AWGNCR}. An AWGN channel with binary-input capacity $C=0.7$ is used to calculate upper and lower bounds on the bit-channel error probabilities.
 The CR at different code lengths $2^{10}$, and $2^{14}$ are logged at different target FER $E$. The rate, achievable by construction of the FP code at each target FER $E$, is also logged. It is observed that at a larger target FER $E$, a higher rate is possible, due to possible accumulation of more good-channel bits. The CR also increases with the target FER $E$, due to the increase of the relaxation threshold $\mathcal{E}_g$, despite the increase in the code rate. Since the number of polarization levels increases with the code block length $N$, the achievable CR from RP increases with $N$. The effect of CR due to bad-channel relaxation becomes more visible, at higher target FER, as $\mathcal{E}_b$ also becomes lower. 

\begin{figure}[t!]
\centering
\includegraphics[height=3in, width=\linewidth]{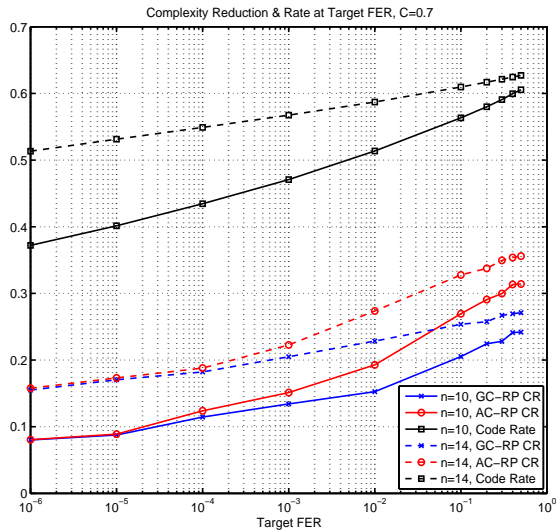} 
\caption{Complexity reduction ratios on AWGN channel at code lengths $N=10$, and $N=14$ at different code rates}
\label{fig:AWGNCR}
%\vspace{-0.5cm}
\end{figure}

The error-rate performance by actual relaxed successive cancellation decoding of the RP codes is shown in \figref{fig:AWGN1}, for binary phase shift keying (BPSK) on an AWGN channel with variance $\sigma^2$ as a function of the signal to noise ratio SNR$=10 \log_{10} (1/\sigma^2)$. Practical code length of $N=4096$ is assumed with near half-rate code of $R=0.59$, respectively. The code is constructed with Algorithm 1, assuming an AWGN channel with binary-input capacity $C=0.7$, and with $E = 0.1$.
For simpler comparison, both the GC-RP and AC-RP codes use the same good bit-channel set found by construction (Stage 4) of the AC-RP code. However, the FP good-channel set is optimized for the FP polar code.  It is observed that the frame error rate (FER) and bit error rate (BER)  performances of the GC-RP code are similar to those of the AC-RP code. Although, the RP codes can have a slightly higher FER than the FP code due to different information sets, it is observed that the RP codes have lower information bit error rates than the FP codes. This verifies the proper design and selection of relaxation thresholds for the RP codes. Another important observation is that the proposed construction of relaxed polar codes is robust enough, such that it performs well over the whole range of simulated SNRs, although the codes are constructed for a certain SNR.

\begin{figure}[t!]
\centering
\includegraphics[height=3in, width=\linewidth]{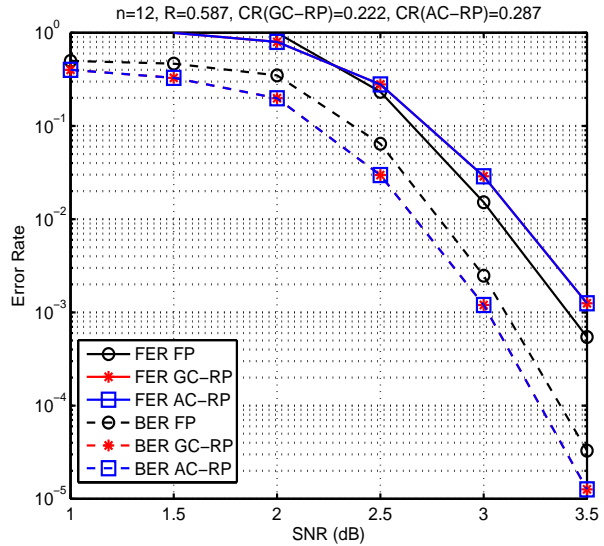}%{Res_n12_R0554_ep01b.eps} %{BLCC.eps}\\%{LM2SMLM.eps}\\
\caption{Error performance of relaxed polar codes by BPSK on AWGN channel. The code length is $2^{12}$ and the code rate is $0.59$
}
\label{fig:AWGN1}
\end{figure}

The performance of the relaxed SCL decoder is compared with the performance of the regular SCL decoder for a numerical example. The simulations are done for the code block length $N = 1024$, and rate $R = 0.5$. First, the fully polarized polar code of rate $0.5$ is constructed for an AWGN channel at an $\mbox{SNR}=2$ dB. Then, the all-channel modified relaxed polar code is constructed by considering the same set of information bit indices. The complexity reduction ratio of the modified relaxed polar code from the good channel relaxation only is $0.1639$ and from the all channel relaxation is $0.3340$. Regular SC decoding of the FP code and RSCD of the RP code are simulated and compared over the AWGN channel for the constructed codes, which corresponds to the case with list size $L=1$. Furthermore, the relaxed SCL decoder, as discussed in Section \ref{sec:dec}, and the regular SCL decoder, with a maximum list size of 32 are simulated and compared as well. For list decoding, the polar information bits are concatenated with a CRC code of length 16, where the rates are adjusted accordingly so that the actual information rate is $0.5$, i.e., the information block length of the polar-CRC code is increased to $512+16=528$. The simulation results are shown in \figref{fig:RSCLD} and show about $1$ dB SNR gain with list decoding. It is observed that with successive cancellation decoding, the RP code has a slightly better FER than the FP code. Since the RP code has the same information set as the FP code, this can be justified by Lemma \ref{lem:error}. Furthermore, a better bit error rate (BER), with up to $0.2$ dB SNR, is observed for the RP code compared to FP code.

\begin{figure}[t!]
\centering
\includegraphics[width=\linewidth]{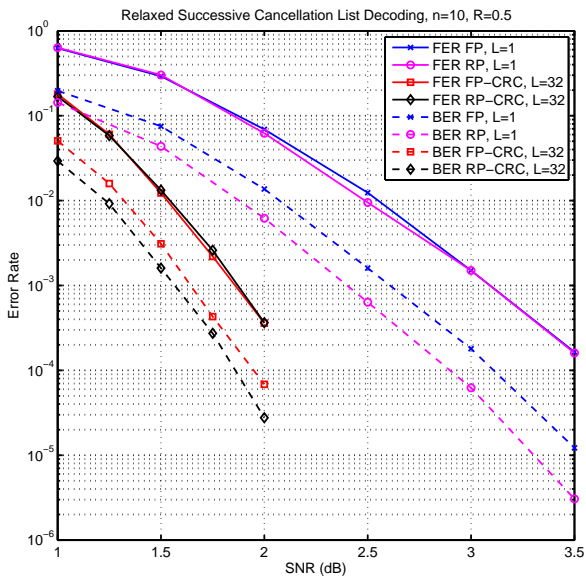} %{BLCC.eps}\\%{LM2SMLM.eps}\\
\caption{Performance of relaxed successive cancellation list decoding on AWGN Channels, the code length is $2^{10}$ and the polar code rate is $0.5$ at $L=1$ without CRC bits.}
\label{fig:RSCLD}
%\vspace{-0.5cm}
\end{figure}

  \section{Conclusion}
  \label{sec:sec7}
  In this work, a new paradigm for polar codes, called relaxed polar coding, is investigated. In relaxed polar codes, a bit-channel will not be further polarized if it has been already polarized to be sufficiently good or sufficiently bad. Hence, encoding and decoding of relaxed polar (RP) codes have lower computational and time complexities than those of conventional polar codes. RP codes also have lower space complexity than conventional polar codes in fixed point hardware implementations, due to the less number of bits required to store the likelihood ratios. This has the compound effect of decoder implementations with less power consumption. It is proved in this work that, similar to conventional polar codes, RP codes are capacity achieving. It is also shown that with proper design, RP codes will have lower  error rates than conventional polar codes of the same rate. Constructions of RP codes on the binary erasure channel, and on general BMS channels are described. Asymptotic and finite-length bounds on the complexity reduction achievable by relaxed polar coding are derived and verified for the binary erasure channel against actual constructions. The relaxed successive cancellation decoding (RSCD) for relaxed polar codes is described. The successive cancellation list decoder for polar-CRC codes \cite{TVlist, NiuCRC} is also modified for list-decoding of relaxed polar-CRC codes. Moreover, we discuss how simplified successive cancellation decoding \cite{alamdar2011simplified}, can be done on top of RSCD to further reduce the decoding complexity and latency of RP codes.
For a code of rate $0.3$ and length $2^{16}$, the results show an $50.9\%$ decoding complexity reduction ratio and an $93.5\%$  decoding latency reduction ratio are possible with relaxed successive cancellation decoding of RP codes on the BEC. 	
 It is verified by numerical simulations on the AWGN channel that the information bit error rates of properly designed RP codes are at least as good as those of conventional polar code with the same rate.

Next, we discuss possible directions for future work. 	
Whereas the derived bounds on the complexity reduction ratios on the BEC channel have explicit closed form formulas, the numerical results showed that there is room to derive tighter bounds. Due to the recursive calculations required to calculate the Bhattacharyya parameters at an arbitrary bit-channel, then the exact bounds are expected to be recursive in nature. For general BMS channels, it is more difficult to get closed form bounds as the polarization results in bit-channels with exponentially large alphabets.
Another issue is that we considered the construction of relaxed polar codes based on Ar{\i}kan's $2 \times 2$ polarization matrix. This construction can be readily extended to the general case of $l \times l$ polarization matrices, where $l \geq 3$ \cite{KoradaPolarConst}.
Also, it is noted that relaxing the good bit-channels results in rate-1 constituent codes. These bit-channels can be further concatenated with other codes to further reduce the error probability. In fact, the interleaved concatenation scheme for polar codes \cite{mahdavifar2014concatenated} adaptively concatenates better bit-channels with outer codes whose rates are higher than those concatenated with the worse bit-channels,  in order to maintain the target code rate or the target error performance of the concatenated code. When constructing concatenated relaxed polar codes, the adaptive concatenation scheme can be modified to take into account, or jointly optimize, the selection of the relaxed bit channels.

\section*{Acknowledgement}
The authors would like to sincerely thank the associate editor and the reviewers for their careful review of this paper and for their valuable comments which have improved its quality.

  \bibliographystyle{IEEEtran}

\bibliography{polar}

\section*{Appendix}

For any DMC $W$, let $E(W)$ denote the probability of error of $W$ under ML decoder. Let $W: \sX \rightarrow \sY$ be a BMS channel, where $\sX$ is the binary alphabet. Then
$$
E(W) = \sum_{y \in \sY} \frac{1}{2} \min\left\{W(y|0),W(y|1)\right\}
$$
Let the channel $W$ be polarized using the Ar{\i}kan's Butterfly. Let the polarized bit-channels be denoted by $W^+$ and $W^-$. Then it is known that
\be{operation1}
W^-(y_1,y_2 | u_1) = \frac{1}{2} \sum_{u_2 \in \sX} W(y_1 | u_1 \oplus u_2) W(y_2|u_2)
\ee
and
\be{operation2}
W^+(y_1,y_2,u_1 | u_2) = \frac{1}{2} W(y_1 | u_1 \oplus u_2) W(y_2|u_2)
\ee
\begin{lemma}
For any BMS $W$,
\be{w-}
E(W^-) = 2 E(W) - 2E(W)^2
\ee
and
\be{w+}
E(W^+) \geq 2E(W)^2
\ee
\end{lemma}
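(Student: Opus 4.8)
The plan is to prove the two claims separately, handling the equality for $W^-$ by a direct computation and the inequality for $W^+$ by reducing it to an elementary termwise bound. Throughout I would abbreviate $a_y \deff W(y|0)$ and $b_y \deff W(y|1)$, and repeatedly use the identity $\min\{s,t\} = \tfrac12(s+t-|s-t|)$ together with the normalization $\sum_y a_y = \sum_y b_y = 1$ valid for any channel. A preliminary observation I would record is that $E(W) = \tfrac12 - \tfrac14\sum_y|a_y-b_y|$, so that $\sum_y |a_y - b_y| = 2(1-2E(W))$; this is the bridge that converts the sums appearing below into expressions in $E(W)$.

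For \eqref{w-}, I would start from \eqref{operation1}, which gives $W^-(y_1,y_2|0) = \tfrac12(a_{y_1}a_{y_2}+b_{y_1}b_{y_2})$ and $W^-(y_1,y_2|1) = \tfrac12(a_{y_1}b_{y_2}+b_{y_1}a_{y_2})$. Applying the min-identity, the key algebraic step is that the sum and the difference of these two quantities factor as $(a_{y_1}+b_{y_1})(a_{y_2}+b_{y_2})$ and $(a_{y_1}-b_{y_1})(a_{y_2}-b_{y_2})$ respectively, so that $\min\{W^-(\cdot|0),W^-(\cdot|1)\} = \tfrac14\bigl[(a_{y_1}+b_{y_1})(a_{y_2}+b_{y_2}) - |a_{y_1}-b_{y_1}|\,|a_{y_2}-b_{y_2}|\bigr]$. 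Summing over $y_1,y_2$ the double sum separates into a product of single sums; using $\sum_y(a_y+b_y)=2$ and the preliminary observation yields $E(W^-) = \tfrac18\bigl[4 - 4(1-2E(W))^2\bigr] = 2E(W)-2E(W)^2$, which is exactly \eqref{w-}.

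For \eqref{w+}, I would write out $E(W^+)$ from \eqref{operation2}. Summing the error contribution over the side-information bit $u_1\in\{0,1\}$, each output pair $(y_1,y_2)$ contributes $\tfrac14\bigl[\min\{a_{y_1}a_{y_2},b_{y_1}b_{y_2}\} + \min\{b_{y_1}a_{y_2},a_{y_1}b_{y_2}\}\bigr]$, while on the other side $2E(W)^2 = \tfrac12\sum_{y_1,y_2}\min\{a_{y_1},b_{y_1}\}\min\{a_{y_2},b_{y_2}\}$. Hence it suffices to prove the termwise inequality $\min\{ac,bd\}+\min\{bc,ad\} \ge 2\min\{a,b\}\min\{c,d\}$ for all nonnegative reals $a,b,c,d$ (with $a=a_{y_1}$, $b=b_{y_1}$, $c=a_{y_2}$, $d=b_{y_2}$). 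I expect this elementary inequality to be the main obstacle, in the sense that it is where the argument stops being a mechanical computation; I would establish it by a short case analysis on the orderings of $a$ versus $b$ and $c$ versus $d$. For instance, when $a\ge b$ and $c\ge d$ the right side equals $2bd$ and both minima on the left are at least $bd$, and the remaining cases follow by the same bounding or by symmetry. Summing the termwise inequality over $y_1,y_2$ and multiplying through by the common factor $\tfrac14$ gives \eqref{w+}.
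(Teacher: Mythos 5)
Your proof is correct and follows essentially the same route as the paper's: write out the transition probabilities of $W^-$ and $W^+$ in terms of the channel entries, compute $E(W^-)$ exactly by factoring the double sum, and lower-bound $E(W^+)$ termwise (the paper simply pre-sorts each output's two probabilities into $a_i=\min$ and $b_i=\max$, which makes your absolute-value identity and case analysis implicit). One small simplification: the termwise inequality $\min\{ac,bd\}+\min\{bc,ad\}\geq 2\min\{a,b\}\min\{c,d\}$ that you flag as the main obstacle needs no case analysis at all, since each of the four products $ac,ad,bc,bd$ pairs one element of $\{a,b\}$ with one of $\{c,d\}$ and is therefore automatically at least $\min\{a,b\}\min\{c,d\}$.
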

\begin{proof}
Suppose that the size of output alphabet $\sY$ is $M$. Let $\sY=\left\{y_1,y_2,...,y_M\right\}$. Then for $i=1,2,\dots,M$, let
$$
a_i = \min\left\{W(y|0),W(y|1)\right\}
$$
and
$$
b_i = \max\left\{W(y|0),W(y|1)\right\}
$$
Then
\be{eq1}
E(W) = \frac{1}{2} \sum_{i=1}^M a_i
\ee
and
\be{eq2}
\sum_{i=1}^M a_i + \sum_{i=1}^M b_i = 2
\ee
The size of output alphabet of $W^-$ is $M^2$. For any pair $(y_i,y_j) \in \sY^2$,
\begin{align*}
&\left\{W^-(y_i,y_j|0),W^-(y_i,y_j|1)\right\} \\
& = \left\{\frac{1}{2}(a_i a_j + b_i b_j),\frac{1}{2}(a_i b_j + a_j b_i)\right\}
\end{align*}
Notice that $a_i a_j + b_i b_j \geq a_i b_j + a_j b_i$. Therefore,
\begin{align*}
E(W^-) &= \frac{1}{4}\sum_{i=1}^M \sum_{j=1}^M (a_i b_j + a_j b_i) = \frac{1}{2}(\sum_{j=1}^M a_i)(\sum_{j=1}^M b_j) \\ & = E(W)(2-2E(W))
\end{align*}
where the last equality follows by \eqref{eq1} and \eqref{eq2}. This proves the first part of the lemma.

The size of the output alphabet of $W^+$ is $2M^2$. For any pair $(y_i,y_j) \in \sY^2$, there are two corresponding elements in the output alphabet of $W^+$. Then
\eq{\begin{split} &
\Bigl\{\left\{W^+(y_i,y_j,0|0),W^+(y_i,y_j,0|1)\right\} \\ &
,\left\{W^+(y_i,y_j,1|0),W^+(y_i,y_j,1|1)\right\}\Bigr\} =  \\
& \quad
\Bigl\{\left\{\frac{1}{2}a_ia_j,\frac{1}{2}b_ib_j\right\},\left\{\frac{1}{2}a_ib_j,\frac{1}{2}a_jb_i\right\}\Bigr\}
\end{split}}
Then
\begin{align*}
2E(W^+) &= \frac{1}{2}\sum_{i=1}^M \sum_{j=1}^M a_i a_j + \frac{1}{2} \sum_{i=1}^M \sum_{j=1}^M \min\left\{a_ib_j,a_jb_i\right\} \\
& \geq \sum_{i=1}^M \sum_{j=1}^M a_i a_j = (\sum_{i=1}^M a_i)^2 = 4 E(W)^2
\end{align*}
which proves the second part of the lemma.
\end{proof}

\end{document}